\algrenewcommand\alglinenumber[1]{\scriptsize #1:}
\algrenewcommand\algorithmicindent{1em}%
\tikzset{
    cross/.pic = {
    \draw[rotate = 45] (-#1,0) -- (#1,0);
    \draw[rotate = 45] (0,-#1) -- (0, #1);
    }
}
\newcommand{\bea}{\begin{eqnarray}}
\newcommand{\eea}{\end{eqnarray}}
\newcommand{\bean}{\begin{eqnarray*}}
\newcommand{\eean}{\end{eqnarray*}}
\newcommand{\sbinom}[2]{\left( \begin{array}{c} #1 \\ #2 \end{array} \right) }
\newcommand{\XlgA}{\mathsf{X}_\mathsf{g_A}}
\newcommand{\XlgB}{\mathsf{X}_\mathsf{g_B}}
\newcommand{\Xrg}{\mathsf{Y}_\mathsf{g}}
\newcommand{\Xlr}{\mathsf{X}_{s}}
\newcommand{\tx}{\widetilde{\bfx}}
\newcommand{\cE}{{\cal E}}
\newcommand{\cG}{{\cal G}}
\newcommand{\cL}{{\cal L}}
\newcommand{\cM}{{\cal M}}
\newcommand{\cN}{{\cal N}}
\newcommand{\cO}{{\cal O}}
\newcommand{\cP}{{\cal P}}
\newcommand{\cR}{{\cal R}}
\newcommand{\cS}{{\cal S}}
\newcommand{\cT}{{\cal T}}
\newcommand{\cU}{{\cal U}}
\newcommand{\cX}{{\cal X}}
\newcommand{\cY}{{\cal Y}}
\newcommand{\sG}{\script{G}}
\newcommand{\sP}{\script{P}}
\newcommand{\bfa}{{\boldsymbol a}}
\newcommand{\bfb}{{\boldsymbol b}}
\newcommand{\bfd}{{\boldsymbol d}}
\newcommand{\bfx}{{\boldsymbol x}}
\newcommand{\bfy}{{\boldsymbol y}}
\DeclareMathOperator*{\argmin}{arg\,min}
\DeclareMathAlphabet{\mathbfsl}{OT1}{cmr}{bx}{it}
\newcommand{\uuu}{\kern-1pt\mathbfsl{u}\kern-0.5pt}
\newcommand{\vvv}{\kern-1pt\mathbfsl{v}\kern-0.5pt}
\newcommand{\myboxplus}{\kern1pt\mbox{\small$\boxplus$}}
\makeatletter \DeclareRobustCommand{\sbinom}{\genfrac[]\z@{}}
\newcommand{\G}[2]{\sbinom{{#1}\kern-1pt}{{#2}\kern-1pt}}
\newcommand{\Gq}[2]{\sbinom{{#1}\kern-0.25pt}{{#2}\kern-0.25pt}}
\newcommand{\Ps}{\smash{{\sP\kern-2.0pt}_q\kern-0.5pt(n)}}
\newcommand{\sPs}{\smash{{\sP\kern-1.5pt}_q(n)}}
\newcommand{\Ptwo}{\smash{{\sP\kern-2.0pt}_2\kern-0.5pt(n)}}
\newcommand{\Ptwom}{\smash{{\sP\kern-2.0pt}_2\kern-0.5pt(m)}}
\newcommand{\Ptwonm}{\smash{{\sP\kern-2.0pt}_2\kern-0.5pt(n+m)}}
\newcommand{\Ptwoa}{\smash{{\sP\kern-2.0pt}_2\kern-0.5pt(1)}}
\newcommand{\Ptwob}{\smash{{\sP\kern-2.0pt}_2\kern-0.5pt(2)}}
\newcommand{\Ptwoc}{\smash{{\sP\kern-2.0pt}_2\kern-0.5pt(3)}}
\newcommand{\Ptwod}{\smash{{\sP\kern-2.0pt}_2\kern-0.5pt(4)}}
\newcommand{\Ptwoe}{\smash{{\sP\kern-2.0pt}_2\kern-0.5pt(5)}}
\newcommand{\Ptwof}{\smash{{\sP\kern-2.0pt}_2\kern-0.5pt(6)}}
\newcommand{\Ptwokm}{\smash{{\sP\kern-2.0pt}_2\kern-0.5pt(2k-1)}}
\newcommand{\Pone}{\smash{{\sP\kern-2.5pt}_2\kern-0.5pt(n{-}1)}}
\newcommand{\Gr}{\smash{{\sG\kern-1.5pt}_q\kern-0.5pt(n,k)}}
\newcommand{\Gi}{\smash{{\sG\kern-1.5pt}_q\kern-0.5pt(n,i)}}
\newcommand{\Gj}{\smash{{\sG\kern-1.5pt}_q\kern-0.5pt(n,j)}}
\newcommand{\Grmk}{\smash{{\sG\kern-1.5pt}_q\kern-0.5pt(n,n-k)}}
\newcommand{\Grdk}{\smash{{\sG\kern-1.5pt}_q\kern-0.5pt(2k,k)}}
\newcommand{\Grekappa}{\smash{{\sG\kern-1.5pt}_q\kern-0.5pt(n,e+1-\kappa)}}
\newcommand{\Grtwoekappa}{\smash{{\sG\kern-1.5pt}_q\kern-0.5pt(n,2e+1-\kappa)}}
\newcommand{\Gremkappa}{\smash{{\sG\kern-1.5pt}_q\kern-0.5pt(n,e-\kappa)}}
\newcommand{\Gn}{\smash{{\sG\kern-1.5pt}_2\kern-0.5pt(n,n{-}1)}}
\newcommand{\Gnq}{\smash{{\sG\kern-1.5pt}_q\kern-0.5pt(n,n{-}1)}}
\newcommand{\Gone}{\smash{{\sG\kern-1.5pt}_2\kern-0.5pt(n,1)}}
\newcommand{\Gqone}{\smash{{\sG\kern-1.5pt}_q\kern-0.5pt(n,1)}}
\newcommand{\GTwo}{\smash{{\sG\kern-1.5pt}_2\kern-0.5pt(n,k)}}
\newcommand{\GTwonk}[2]{{\smash{{\sG\kern-1.5pt}_2\kern-0.5pt({#1},{#2})}}}
\newcommand{\Gnk}{\smash{{\sG\kern-1.5pt}_2\kern-0.5pt(n,n{-}k)}}
\newcommand{\Greone}{\smash{{\sG\kern-1.5pt}_q\kern-0.5pt(n,e{+}1)}}
\newcommand{\Gretwo}{\smash{{\sG\kern-1.5pt}_q\kern-0.5pt(n,e{+}2)}}
\newcommand{\be}[1]{\begin{equation}\label{#1}}
\newcommand{\ee}{\end{equation}}
\newcommand{\ab}[1]{{\footnotesize  [{\textcolor{blue}{#1}} \textcolor{blue!60!black}{--avital}]\normalsize}}
\newcommand{\Cref}[1]{Co\-rol\-la\-ry\,\ref{#1}}
\newtheorem{lemma}{Lemma}
\newtheorem{corollary}{Corollary}
\newtheorem{definition}{Definition}
\newtheorem{proposition}{Proposition}
\newtheorem{problem}{Problem}
\newcommand{\hanmao}[1]{{\footnotesize [\gcomment{#1}\;\;\vcomment{--HanMao}]}}
\newcommand{\vcomment}[1]{{\color{violet}#1}}
\newcommand{\gcomment}[1]{{\color{OliveGreen}#1}}
\newenvironment{breakablealgorithm}
  {
   \begin{center}
     \refstepcounter{algorithm}
     \hrule height.8pt depth0pt \kern2pt
     \renewcommand{\caption}[2][\relax]{
       {\raggedright\textbf{\fname@algorithm~\thealgorithm} ##2\par}%
       \ifx\relax##1\relax 
         \addcontentsline{loa}{algorithm}{\protect\numberline{\thealgorithm}##2}%
       \else 
         \addcontentsline{loa}{algorithm}{\protect\numberline{\thealgorithm}##1}%
       \fi
       \kern2pt\hrule\kern2pt
     }
  }{
     \kern2pt\hrule\relax
   \end{center}
  }
\begin{document}

\author{%
\small
\IEEEauthorblockN{\textbf{Shubhransh~Singhvi}\IEEEauthorrefmark{1}, 
\textbf{Avital~Boruchovsky}\IEEEauthorrefmark{2}, 
\textbf{Han~Mao~Kiah}\IEEEauthorrefmark{3}
and \textbf{Eitan~Yaakobi}\IEEEauthorrefmark{2}}\\
  \IEEEauthorblockA{\IEEEauthorrefmark{1}%
                      Signal Processing  \&  Communications Research  Center, International Institute  of  Information Technology, Hyderabad, India}\\
\IEEEauthorblockA{\IEEEauthorrefmark{2}%
                     Department of Computer Science, 
                     Technion---Israel Institute of Technology, 
                     Haifa 3200003, Israel}\\
  \IEEEauthorblockA{\IEEEauthorrefmark{3}%
                     School of Physical and Mathematical Sciences, 
		Nanyang Technological University, Singapore
         }
 }

\title{\textbf{Data-Driven Bee Identification for DNA Strands}}
\date{\today}
\maketitle
\thispagestyle{empty}	
\pagestyle{empty}


\begin{abstract}
     We study a data-driven approach to the bee identification problem for DNA strands. 
    The bee-identification problem, introduced by Tandon et al. (2019), requires one to identify $M$ bees, each tagged by a unique barcode, via a set of $M$ noisy measurements. 
    Later, Chrisnata et al. (2022) extended the model to case where one observes $N$ noisy measurements of each bee, and applied the model to address the unordered nature of DNA storage systems.
    
    In such systems, a unique address is typically prepended to each DNA data block to form a DNA strand, but the address may possibly be corrupted.
    While clustering is usually used to identify the address of a DNA strand,
    this requires $\cM^2$ data comparisons (when $\cM$ is the number of reads).
    In contrast, the approach of Chrisnata et al. (2022) avoids data comparisons completely. In this work, we study an intermediate, data-driven approach to this identification task. 

    For the binary erasure channel,  we first show that we can almost surely correctly identify all DNA strands under certain mild assumptions. 
    Then we propose a data-driven pruning procedure and demonstrate that on average the procedure uses only a fraction of $\cM^2$ data comparisons.
    Specifically, for $\cM = 2^n$ and erasure probability $p$, the
    expected number of data comparisons performed by the procedure is $\kappa\cM^2$, where $\left(\frac{1+2p-p^2}{2}\right)^n \leq \kappa \leq \left(\frac{1+p}{2}\right)^n $. 

\end{abstract}

\begin{IEEEkeywords}
Permutation Recovery, DNA Data Storage, Clustering.
\end{IEEEkeywords}
    
\section{Introduction}
Existing storage technologies cannot keep up with the modern data explosion. Current solutions for storing huge amounts of data uses magnetic and optical disks. Despite improvements in optical discs, storing a zettabyte of data would still take many millions of units, and use significant physical space. Certainly, there is a growing need for a significantly more durable and compact storage system. The  potential of macromolecules in ultra-dense storage systems was recognized as early  as in the 1960s, when  the  celebrated physicists Richard Feynman outlined his vision for nanotechnology in the talk `There is plenty of room at the bottom'. Using DNA is an attractive possibility because it is extremely dense (up to about 1 exabyte per cubic millimeter) and durable (half-life of over 500 years).
Since the first experiments conducted by Church et al. in 2012~\cite{Church.etal:2012} and Goldman et al. in 2013~\cite{Goldman.etal:2013}, there have been a flurry of experimental demonstrations (see \cite{Shomorony.2022,Yazdi.etal:2015b} for a
survey). Amongst the various coding design considerations, in this work, we study the unsorted nature of the DNA storage system~\cite{LSWY18, Shomorony.2022}.

A DNA storage system consists of three important components. The first is the DNA synthesis which produces the oligonucleotides, also called \textit{strands}, that encode the data. The second part is a storage container with compartments  which  stores the  DNA  strands, however without order. Finally, to retrieve the data, the DNA is accessed using next-generation sequencing, which results in several noisy copies, called \textit{reads}. The processes of synthesizing, storing, sequencing, and handling strands are all error prone. Due to this unordered nature of DNA-based storage systems, when the user retrieves the information, in addition to decoding the data, the user has to determine the identity of the data stored in each strand. A typical solution is to simply have a set of addresses and store this address information as a prefix to each DNA strand. As the addresses are also known to the user, the user can identify the information after the decoding process. As these addresses along with the stored data are prone to errors, this solution needs further refinements.

In \cite{Organick}, the strands (strand = address + data) are first clustered with respect to the edit distance. 
Then the authors determine a consensus output amongst the strands in each cluster and 
finally, decode these consensus outputs using a classic concatenation scheme.  
For this approach,  the clustering step is computationally expensive.
When there are $\cM$ reads, the usual clustering method involves $\cM^2$ pairwise comparisons to compute distances. This is costly when the data strands are long, and 
the problem is further exacerbated if the metric is the edit distance.
Therefore, in  \cite{clustering}, a distributed approximate clustering algorithm was proposed and the authors clustered 5 billion strands in 46 minutes on 24 processors.

In \cite{Chrisnata2022}, the authors proposed an approach that avoids clustering.
Informally, the bee-identification problem requires the receiver to identify $M$ “bees” using a set of $M$ unordered noisy measurements~\cite{TTV2019}. 
Later, in~\cite{Chrisnata2022}, the authors generalized the setup to multi-draw channels where every bee (strand) results in $N$ noisy outputs (reads). 
The task then is to identify each of the $M$ bees from the $MN$ noisy outputs and it turns out that this task can be reduced to a minimum-cost network flow problem.
In contrast to previous works, the approach in~\cite{Chrisnata2022} utilizes only the noisy addresses, which are of significantly shorter length, and the method does not take into account the associated noisy data. 
Hence, this approach involves no data comparisons.

In this work, we consider an intermediate, data-driven approach to the identification task by drawing ideas from the clustering and the bee identification problems. Specifically, we focus on the case of the binary erasure channel and the case where the addresses are uncoded. 
We first show that we can almost surely correctly identify all DNA strands under certain mild assumptions. 
Then we propose a data-driven pruning procedure and demonstrate that on average the procedure uses only a fraction of $\cM^2$ data comparisons (when there are $\cM$ reads). 
We formally define our problem in the next section.

\section {Problem Formulation}
\label{sec:prob}
Let $N$ and $M$ be positive integers. Let $[M]$ denote the set $\{1,2,\ldots,M\}$. An $N$-permutation $\psi$ over $[M]$ is an $NM$-tuple $(\psi(j))_{j\in[MN]}$ where every symbol in $[M]$ appears exactly $N$ times, and we denote the set of all $N$-permutations over $[M]$ by $\mathbb{S}_N(M)$.
Let $C \subseteq{\{0,1\}}^n$ be a binary code of length $n$ and size $M$ (the addresses), and assume that every codeword $\bfx_i \in C$ is attached to a length-$L$ data part $\bfd_i\in {\{0,1\}}^L$ to form a strand, which is the tuple, $(\bfx_i,\bfd_i)$. We denote the ratio of the length of the data part to the length of the address part by $\beta$, i.e., $L \triangleq \beta n$, where $\beta \in \mathbb{R_{+}}$. Let the multiset of  data be denoted by $D = \{\{\bfd_i:i\in[M]\}\}$ and the set of strands by $R=\{(\bfx_i,\bfd_i):i\in[M]\}$. Throughout this paper, we assume that $D$ is drawn uniformly 
at random over $\{0,1\}^L$ and that $C$ is the whole space. Let $\cS_N((\bfx,\bfd))$ denote the multiset of channel outputs when $(\bfx,\bfd)$ is transmitted $N$ times through the channel $\cS$. Assume that the entire set $R$
is transmitted through the channel $\cS$, hence an unordered multiset, $R'_{N}=\{\{(\bfy_1,\bfd_1'),(\bfy_2,\bfd_2'),\ldots,(\bfy_{MN},\bfd_{MN}')\}\}$, of $MN$ noisy strands (reads)  is obtained, where for every $j\in[MN]$, $(\bfy_j, \bfd_j') \in \cS_N((\bfx_{\pi(j)},\bfd_{\pi(j)}))$ for some  $N$-permutation $\pi$ over $[M]$, which will be referred to as the \textit{true $N$-permutation}. Note that the receiver, apart from the set of reads $R'_N$, has access to the set of addresses $C$ but does not know the set of data $D$. In this work, we first consider the following problem.

\begin{problem}
\label{prob1}
Given $\cS$ and $\epsilon$, find the region $\cR\in\mathbb{R}_+^2$, such that for $(N,\beta) \in \cR$,  it is possible to identify the true permutation with probability at least $1-\epsilon$ when the code $C$ is the whole space and the data $D$ is drawn uniformly at random. 
\end{problem}

For $(N,\beta)\in\cR$, we can find the true permutation by making at least $(N|C|)^2$ data comparisons. This may be expensive when the data parts are long, i.e., when $L$ is large. Therefore, our second objective is to reduce the number of data comparisons.

\begin{problem}
\label{prob2}
   Let $\kappa<1$. Given $\cS$ and $\epsilon$ and $(N,\beta)\in \cR$, design an algorithm to identify the true permutation with probability at least $1-\epsilon$ using $\kappa (N|C|)^2$ data comparisons. As before, $C$ is the whole space and $D$ is  drawn uniformly at random. 
\end{problem}



Unless otherwise stated, we assume that $\cS$ is the $\mathsf{BEC}(p)$ channel with $0<p<1$.
In Section~\ref{Sec:PMA}, we first propose an extension of the Peeling Matching Algorithm~\cite{Kiah2021} to the multi-draw erasure channel. We then demonstrate that the peeling matching algorithm identifies the true permutation with a vanishing probability as $n$ grows. In Section~\ref{sec:UniqPerm}, we address Problem~\ref{prob1} and identify the region $\cR$ for which there exists only one valid permutation, viz. the true permutation. In Section~\ref{Sec:DPA}, we propose a data-driven pruning algorithm that identifies the true-permutation with probability at least $1-\epsilon$ when $(N,\beta) \in \cR$. In Section~\ref{Sec:DPA_analysis}, we analyse the expected number of data comparisons performed by the data-driven pruning algorithm.

\section{The Peeling Matching Algorithm ($\mathsf{PMA}$)}
\label{Sec:PMA}
In this section, we extend the Peeling Matching Algorithm (PMA), presented in \cite{Kiah2021} for $N=1$ to a general value of $N$. The $\mathsf{PMA}$-based approach uses solely the information stored in the addresses to identify the true-permutation, and does not take into consideration the noisy data that is also available to the receiver.  The first step in the peeling matching algorithm is to construct a bipartite undirected graph $\cG = (\cX\cup \cY, E)$, where the left nodes are the addresses ($\cX=C$) and the right nodes are the noisy reads ($\cY = R'_N$). There exists an edge between $\bfx\in \cX$ and $(\bfy,\bfd')\in \cY$ if and only if $P(\bfy|\bfx)>0$, where $P(\bfy|\bfx)$ is the likelihood probability of observing $\bfy$ given that $\bfx$ was transmitted. For $\bfx\in \cX$ and $(\bfy,\bfd')\in\cY$ let $E_\bfx$
and $E_{(\bfy,\bfd')}$ denote the multiset of neighbours of $\bfx$ and the set of neighbours of $(\bfy,\bfd')$ in $\cG$, respectively, i.e.,  $E_\bfx=\{\{(\bfy,\bfd')|(\bfx,(\bfy,\bfd'))\in E\}\}$, $E_{(\bfy,\bfd')}=\{\bfx|(\bfx,(\bfy,\bfd'))\in E\}$. Note that the degree of every left node is at least $N$ as $\cS_N((\bfx,\bfd)) \subseteq{} E_\bfx$. For right nodes and left nodes with degrees $1$ and $N$ respectively, the corresponding neighbor(s) can be matched with certainty. For ease of exposition, we refer to such nodes as \textit{good} nodes. 
\begin{definition}
A node $(\bfy,\bfd')\in \cY$ is said to be a \textbf{good right node} if $|E_{(\bfy,\bfd')}| = 1$. A node $\bfx \in \cX$ is said to be a \textbf{Type-A good left node} if $|E_\bfx| = N$ or a \textbf{Type-B good left node} if $|\{(\bfy,\bfd')\in E_x: |E_{(\bfy,\bfd')}| =1\}| = N$. 
\end{definition}

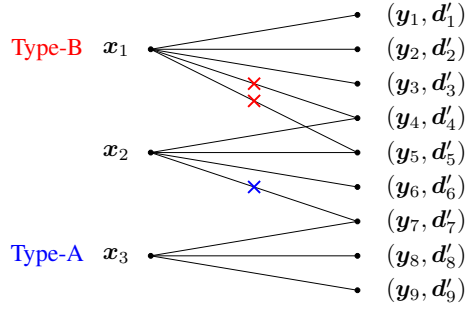
\begin{figure}[H]
\centering
\resizebox{6.5cm}{!}
{
\begin{tikzpicture}
\filldraw (1.5,0.5) circle (1pt);
\filldraw (1.5,1) circle (1pt);
\filldraw (1.5,1.5) circle (1pt);
\filldraw (1.5,2) circle (1pt);
\filldraw (1.5,2.5) circle (1pt);
\filldraw (1.5,3) circle (1pt);
\filldraw (1.5,3.5) circle (1pt);
\filldraw (1.5,4) circle (1pt);
\filldraw (1.5,4.5) circle (1pt);

\draw (2.5,0.5) node{$(\bfy_9,\bfd'_9)$};
\draw (2.5,1) node{$(\bfy_8,\bfd'_8)$};
\draw (2.5,1.5) node{$(\bfy_7,\bfd'_7)$};
\draw (2.5,2) node{$(\bfy_6,\bfd'_6)$};
\draw (2.5,2.5) node{$(\bfy_5,\bfd'_5)$};
\draw (2.5,3) node{$(\bfy_4,\bfd'_4)$};
\draw (2.5,3.5) node{$(\bfy_3,\bfd'_3)$};
\draw (2.5,4) node{$(\bfy_2,\bfd'_2)$};
\draw (2.5,4.5) node{$(\bfy_1,\bfd'_1)$};

\filldraw (-1.5,1) circle (1pt);
\filldraw (-1.5,2.5) circle (1pt);
\filldraw (-1.5,4) circle (1pt);

\draw (-1.5,1) -- (1.5,0.5);
\draw (-1.5,1) -- (1.5,1);
\draw (-1.5,1) -- (1.5,1.5);

\draw (-1.5,2.5) -- (1.5,1.5);
\draw (-1.5,2.5) -- (1.5,2);
\draw (-1.5,2.5) -- (1.5,2.5);
\draw (-1.5,2.5) -- (1.5,3);

\draw (-1.5,4) -- (1.5,2.5);
\draw (-1.5,4) -- (1.5,3);
\draw (-1.5,4) -- (1.5,3.5);
\draw (-1.5,4) -- (1.5,4);
\draw (-1.5,4) -- (1.5,4.5);

\draw (-2,4) node{$\bfx_1$};
\draw (-2,2.5) node{$\bfx_2$};
\draw (-2,1) node{$\bfx_3$};

\draw (-3,4) node{\textcolor{red}{Type-B}};
\draw (-3,1) node{\textcolor{blue}{Type-A}};
\path (0,2) pic[blue, rotate = 0, thick] {cross=3.5pt};
\path (0,3.5) pic[red, rotate = 0, thick] {cross=3.5pt};
\path (0,3.25) pic[red, rotate = 0, thick] {cross=3.5pt};


\end{tikzpicture}
}
\caption{Let $N=3$. If $\bfx_1$ is peeled, then $\bfx_2$ becomes a Type-B good left node, and if  $\bfx_3$ is peeled then $\bfx_2$ becomes a Type-A good left~node.}
\label{fig:PMA}
\end{figure}

Let $\Xrg, \XlgA$ and $\XlgB$ denote the set of good right nodes, Type-A good left nodes and Type-B good left nodes, respectively. The peeling matching algorithm when executed over $\cG$, finds good left nodes and identifies the corresponding $N$ channel outputs until there are no good left nodes. Let $\cP_\cG = \left(\cX \cup \cY, \cP_E\right)$ denote the bipartite matching identified by the Peeling Matching Algorithm. 

\begin{breakablealgorithm}
\caption{Peeling Matching Algorithm}
\label{Alg1}
\begin{algorithmic}[1]
\Procedure{Peel}{$\cP_\cG,\cG,\bfx$}
\If{$\bfx\in\XlgA$}
\For{$(\bfy,\bfd')\in E_\bfx$}
    \State Remove $\{(\tx,(\bfy,\bfd')): \tx\in E_{(\bfy,\bfd')}\}$ from $E$
    \State Add $(\bfx, (\bfy,\bfd'))$ to $\cP_E$
    \State Remove $(\bfy,\bfd')$ from $\cY$  
\EndFor
\State Remove $\bfx$ from $\cX$

\ElsIf{$\bfx\in\XlgB$}
    \State Add $\{(\bfx,(\bfy,\bfd')):(\bfy,\bfd') \in E_x\cap \Xrg\}$ to $\cP_E$
   \State Remove $\{(\bfx,(\bfy,\bfd')):(\bfy,\bfd') \in E_x\}$ from $E$ and remove $E_x \cap Y_g$ from $\cY$
   \State Remove $\bfx$ from $\cX$
\EndIf
\EndProcedure
\Procedure{PMA}{$\cP_\cG,\cG$}
\For{$\bfx\in \XlgA \cup \XlgB$}
    \State PEEL$(\cP_\cG,\cG,\bfx)$
\EndFor
\If {$|\cP_E| = N2^n$}
\State \Return $\cP_\cG$
\Else 
\State \Return $\mathsf{FAILURE}$
\EndIf
\EndProcedure
\end{algorithmic}
\end{breakablealgorithm}
Note that as shown in Fig.~\ref{fig:PMA}, peeling Type-A and Type-B good left nodes might generate new Type-A and Type-B good left nodes, respectively. Thus, at any instant during the course of the algorithm, we assume $\XlgA, \XlgB$ to reflect the Type-A and Type-B good left nodes, respectively, at that instant. 

\begin{proposition}\cite{KVY2022}
\label{cycle}
    Algorithm~\ref{Alg1} finds the true permutation if only if there are no cycles in $\cG$. 
\end{proposition}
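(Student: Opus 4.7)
The approach is to analyse the bipartite structure of $\cG$ and argue both directions separately, leaning on one key observation that will underlie everything: every peel step is always correct. Indeed, if $\bfx$ is Type-A good then $|E_\bfx|=N$ forces its $N$ neighbours to coincide with its $N$ true reads (which are guaranteed to lie in $E_\bfx$), while if $\bfx$ is Type-B good then each of the $N$ degree-$1$ right-neighbours is necessarily a true read of $\bfx$, its only possible source. Hence whenever Algorithm~\ref{Alg1} terminates with $|\cP_E|=N2^n$, the returned matching equals the true permutation, and the whole question reduces to whether the peeling procedure can always find a good left node when the graph is still non-empty.

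For the direction "no cycles $\Rightarrow$ success", I would maintain the invariant, preserved by both peel rules, that the residual graph is a forest in which every remaining left node still has all of its $N$ true reads present (a peel only removes a left node together with a subset of its own true reads and some false edges, so no other left node is stranded). It then suffices to show that any non-empty forest satisfying the invariant contains a good left node. Fix a tree component with $a$ left nodes and $b$ right nodes. The invariant gives $b\ge aN$; the tree identity $|E|=a+b-1$ together with the fact that non-leaf right nodes have degree $\ge 2$ forces at least $a(N-1)+1$ degree-$1$ right nodes. Since any degree-$1$ right node must be a true read of its unique neighbour, and each left node has exactly $N$ true reads, no left node has more than $N$ leaf-neighbours. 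An averaging argument then forces some left node to have exactly $N$ leaf-neighbours, i.e., to be Type-B good (the sub-case $a=1$ being handled directly, where $b=N$ forces a Type-A good node). Peeling therefore never stalls and the algorithm returns the true permutation.

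For the converse, "a cycle in $\cG$ $\Rightarrow$ failure", I would argue by contrapositive: if the algorithm terminates with $|\cP_E|=N2^n$ then $\cG$ could not have contained a cycle. The plan is to inspect the graph at termination. Should the algorithm halt without reaching a full matching, the residual subgraph must have every right node of degree $\ge 2$ and every left node with fewer than $N$ degree-$1$ right-neighbours, whence (by the same degree-sum/leaf-counting identities used above, now in reverse) the residual cannot be a forest and must contain a cycle; one then shows that the presence of a cycle initially is precisely what makes some such "irreducible" sub-configuration survive peeling. The main technical obstacle is establishing this survival rigorously, because cascading Type-A peels can remove cycle right-nodes (when those right-nodes happen to be true reads of external left nodes) and Type-B peels can strip cycle edges, so one must carefully argue that at least one cycle of $\cG$ is never fully dissolved by such cascades.
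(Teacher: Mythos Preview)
The paper itself gives no proof of this proposition; it is simply quoted from~\cite{KVY2022}. So there is no in-paper argument to compare against, and I evaluate your proposal on its own merits.

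Your forward direction (``no cycles $\Rightarrow$ success'') is correct. The invariant you isolate---that the residual graph is a forest in which every remaining left node retains its $N$ true reads and every remaining right node keeps the edge to its true source---is preserved by both peel rules, and the leaf count in a tree component with $a$ left and $b=aN$ right vertices does force some left node to have $N$ degree-$1$ neighbours (Type-B), or degree exactly $N$ (Type-A) when $a=1$. In fact a slightly sharper version of your counting shows that Type-A alone already suffices: the sum of left degrees in such a tree is $a+b-1=a(N+1)-1$, so some left node has degree at most $N$, hence exactly $N$.

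The converse is where the real problem lies, and your hesitation there is well founded: the ``only if'' direction is \emph{false} as stated. Take $N=1$, $n=2$, addresses $\bfx_1=00$, $\bfx_2=01$, $\bfx_3=11$, $\bfx_4=10$, with channel outputs $r_1={*}{*}$ (from $\bfx_1$), $r_2={*}1$ (from $\bfx_2$), $r_3=11$ (from $\bfx_3$), $r_4=10$ (from $\bfx_4$). Then $\cG$ contains the $4$-cycle on $\bfx_2,r_1,\bfx_3,r_2$, yet $\bfx_1$ has $r_1$ as its sole neighbour and is Type-A good; peeling $\bfx_1$ deletes $r_1$, the cycle dissolves, and the remaining peels recover the true permutation. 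An analogous example works for any $N\ge 2$: let all $N$ reads of $\bfx_1=00$ equal $0{*}$ and all $N$ reads of $\bfx_2=01$ equal $01$, with every other read unerased; then $\bfx_1$ is immediately Type-A good while sitting on a $4$-cycle with $\bfx_2$. Hence no argument for the stated converse can be completed---your ``main technical obstacle'' is not a gap in technique but a symptom of the statement being too strong. What the result in~\cite{KVY2022} actually concerns (for $N=1$) is the absence of cycles that are \emph{alternating} with respect to the true matching, equivalently uniqueness of the valid permutation; the paper's phrasing in terms of arbitrary cycles is an imprecision, and indeed the $4$-cycles counted in Lemma~\ref{PMA} are all of this alternating type.
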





 Let the multiset of right nodes that are in a cycle be denoted by $\mathsf{Y_{cycle}}$. In the next lemma, we derive a lower bound on the probability of observing at least one cycle in $\cG$. 
\begin{restatable}{lemma}{CyclesPMA}\label{PMA}
The probability of observing at least one cycle in $\cG$ is lower bounded by
    \begin{align*}
    P(|\mathsf{Y_{cycle}}|>1)&>1-\frac{\cU_\mathsf{cycle}}{N2^n(1-\cU_\mathsf{cycle})},
\end{align*}
where $\cU_\mathsf{cycle} \triangleq 2^{-N\big((1+p^2)^n-1\big)}$.
\end{restatable}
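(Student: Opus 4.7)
The plan is to upper bound $P(|\mathsf{Y_{cycle}}|=0)$, the probability that $\cG$ is acyclic, by producing an explicit witness for a cycle and controlling its absence through a counting/second-moment argument. The guiding combinatorial observation is that whenever two distinct outputs $\bfy_i,\bfy_j$ of a single source $\bfx^*\in\{0,1\}^n$ share a common erased coordinate, any $\bfx'\ne\bfx^*$ obtained from $\bfx^*$ by flipping a bit at that commonly erased coordinate is consistent with both outputs. This gives rise to the 4-cycle $\bfx^*\!-\!\bfy_i\!-\!\bfx'\!-\!\bfy_j\!-\!\bfx^*$ in $\cG$, so the absence of a cycle precludes every such witness.

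Concretely, I would introduce a random variable $Z$ counting the number of witness triples $(\bfx^*,\{i,j\},\bfx')$ with $\bfx'\ne\bfx^*$, $1\le i<j\le N$, and both $\bfy_{\bfx^*,i},\bfy_{\bfx^*,j}$ consistent with $\bfx'$. For a fixed triple the witness probability equals $p^{2d(\bfx^*,\bfx')}$, because each of the two outputs must independently have erasures at all $d(\bfx^*,\bfx')$ coordinates where $\bfx^*$ and $\bfx'$ disagree. Summing over $\bfx'\ne\bfx^*$ by the binomial identity $\sum_{d=1}^{n}\binom{n}{d}p^{2d}=(1+p^2)^n-1$ yields $E[Z]=2^n\binom{N}{2}\big((1+p^2)^n-1\big)$, which is precisely where the $(1+p^2)^n$ factor in $\cU$ enters. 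Since $\{Z\ge 1\}\subseteq\{|\mathsf{Y_{cycle}}|\ge 2\}$, it suffices to upper bound $P(Z=0)$.

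To convert this first moment into a lower-tail bound, I would apply a Chebyshev-type inequality $P(Z=0)\le\operatorname{Var}(Z)/E[Z]^2$. Using the independence of BEC outputs across distinct sources, $\operatorname{Var}(Z)$ collapses to $2^n\operatorname{Var}(Z_{\bfx^*})$, where $Z_{\bfx^*}=\sum_{i<j}\big(2^{|E_i\cap E_j|}-1\big)$ and $E_i$ is the erasure set of the $i$-th output of $\bfx^*$. The diagonal variance contribution is governed by $E\!\left[2^{2|E_i\cap E_j|}\right]=(1+3p^2)^n$, and the within-source covariance between pair-witnesses sharing one common output index is governed by the generating expression $(1+2p^2+p^3)^n$ (evaluated position-wise). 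Substituting these into $\operatorname{Var}(Z)/E[Z]^2$ and repackaging the exponential factor as $\cU=2^{-N((1+p^2)^n-1)}$ produces the claimed form $\cU/(N 2^n(1-\cU))$, in which the denominator $N 2^n$ reflects normalization by the total number of reads.

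The main obstacle will be matching the precise coefficient $N$ in the exponent of $\cU$ (rather than the $\binom{N}{2}$ that emerges from the naive pair count) and simultaneously producing the denominator $N 2^n(1-\cU)$. I expect that one cannot apply Chebyshev directly to $Z$ and instead must either (i) replace $Z$ by a finer witness statistic indexed per-output rather than per-pair, or (ii) apply a Paley--Zygmund-type bound to the complementary count $N 2^n-|\mathsf{Y_{cycle}}|$ so that the $N 2^n$ normalization emerges structurally; either route absorbs the extra combinatorial factor and allows the exponential term $2^{-N((1+p^2)^n-1)}$ to appear cleanly. The bookkeeping of within-source covariances and their cancellation against $E[Z]^2$ is the most delicate algebraic portion of the argument.
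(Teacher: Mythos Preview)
Your overall strategy---witness 4-cycles, bound the non-witness probability, apply Chebyshev---matches the paper, and your option (i) (a per-output statistic) is exactly the route taken there. The substantive difference, and the reason you are fighting the $\binom{N}{2}$ versus $N$ mismatch, is the \emph{choice of 4-cycle}. You take two outputs $\bfy_i,\bfy_j$ of the \emph{same} source $\bfx^*$ together with a neighbour $\bfx'$. The paper instead fixes a single right node $(\bfy,\bfd')$ coming from some $\bfx$ and asks whether it lies in \emph{any} 4-cycle $\bfx-(\bfy,\bfd')-\tx-(\widetilde{\bfy},\widetilde{\bfd'})-\bfx$ where $(\widetilde{\bfy},\widetilde{\bfd'})$ is an output of some $\tx\ne\bfx$. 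For each of the $N\binom{n}{r}$ potential partners at Hamming distance $r$ the cycle probability is $p^{2r}$, so (treating these events as independent) the probability that $(\bfy,\bfd')$ is in no such 4-cycle is $\prod_{r=1}^{n}(1-p^{2r})^{N\binom{n}{r}}$. Jensen's inequality on the exponent, followed by $\log(1-a)\le -a$, bounds this product by $\cU_{\mathsf{cycle}}=2^{-N((1+p^{2})^{n}-1)}$. Thus the $N$ in the exponent is the read multiplicity of the \emph{other} sources, not a within-source pair count, and it drops out with no repackaging.

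For the tail bound the paper sets $|\mathsf{Y^*_{cycle}}|=\sum_{(\bfy,\bfd')\in\cY}\mathbb{I}_{\{(\bfy,\bfd')\in\mathsf{Y^*_{cycle}}\}}$, writes its variance as $\sum_{(\bfy,\bfd')}\mathsf{Var}[\mathbb{I}]=N2^{n}q(1-q)$ with $q=P\big((\bfy,\bfd')\notin\mathsf{Y^*_{cycle}}\big)$, and applies Chebyshev to get $P(|\mathsf{Y^*_{cycle}}|<1)\le \mathsf{Var}[|\mathsf{Y^*_{cycle}}|]/\mathbb{E}[|\mathsf{Y^*_{cycle}}|]^{2}=q/(N2^{n}(1-q))\le \cU_{\mathsf{cycle}}/(N2^{n}(1-\cU_{\mathsf{cycle}}))$. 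The $N2^{n}$ in the denominator is simply the number of reads and the $(1-\cU_{\mathsf{cycle}})$ factor is the Bernoulli variance-to-squared-mean ratio; none of the $(1+3p^{2})^{n}$ or $(1+2p^{2}+p^{3})^{n}$ moments you anticipate are needed. (The paper treats the per-node indicators as uncorrelated when summing variances, without justification; this is actually the one place where your more careful covariance accounting would add rigor rather than complication.)
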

\begin{proof}
Let $\mathsf{Y^*_{cycle}}$ denote the multiset of right nodes that are in a cycle of size four. Let $(\bfy,\bfd') \in \cS_N((\bfx,\bfd))$ and $ (\widetilde{\bfy},\widetilde{\bfd'}) \in \cS_N((\tx,\widetilde{\bfd}))$ such that $d_H(\bfx,\tx) = r>0$. It can be verified that
$\{\bfx,(\bfy,\bfd'),\tx, (\widetilde{\bfy},\widetilde{\bfd'})\}$ forms a cycle with probability $p^{2r}$. Therefore, the probability that $(\bfy,\bfd')$ is not in a cycle is 
\begin{align*}
P\left((\bfy,\bfd')\notin\mathsf{Y^*_{cycle}}\right) &= \prod_{r=1}^{n} \left(1-p^{2r}\right)^{N\binom{n}{r}}\\
 &=2^{\log\left(\prod_{r=1}^{n}\left(1-p^{2r}\right)^{N\binom{n}{r}}\right)}\\
&=2^{N\sum_{r=1}^{n} \left({\binom{n}{r}}\log\left(1-p^{2r}\right)\right).}
\intertext{Using Jensen's inequality, it can be verified that}
P\left((\bfy,\bfd')\notin\mathsf{Y^*_{cycle}}\right) &\leq 2^{N(2^n-1)\log\left(\frac{\sum_{r=1}^{n}\binom{n}{r}(1-p^{2r})}{2^n-1}\right)} \\
&= 2^{N(2^n-1)\log\left(1-\frac{(1+p^2)^n-1}{2^n-1}\right)} .\\
\intertext{Using Taylor series expansion, for $|a| < 1,~ \log(1+a) \leq a$. Therefore, we have that}
P\left((\bfy,\bfd')\notin\mathsf{Y^*_{cycle}}\right) &\leq \cU_\mathsf{cycle} \triangleq 2^{-N\big((1+p^2)^n-1\big)}. 
\end{align*}
From linearity of expectation, we have that
\begin{align*}
  &\mathbb{E}\left[|\mathsf{Y^*_{cycle}}|\right] = \sum_{(\bfy,\bfd')\in \cY}\mathbb{E}[\mathbb{I}_{\{(\bfy,\bfd')\in\mathsf{Y^*_{cycle}}\}}]\\
  &= N2^n\left(1-\prod_{r=1}^{n} \left(1-p^{2r}\right)^{N\binom{n}{r}}\right).
\end{align*}
Further, from the linearity of variances of indicator random variables, we have that
\begin{align*}
  &\mathsf{Var}\left[|\mathsf{Y^*_{cycle}}|\right] = \sum_{(\bfy,\bfd')\in \cY}\mathsf{Var}[\mathbb{I}_{\{(\bfy,\bfd')\in\mathsf{Y^*_{cycle}}\}}]\\
  &= N2^n\left(1-\prod_{r=1}^{n} \left(1-p^{2r}\right)^{N\binom{n}{r}}\right)\left(\prod_{r=1}^{n} \left(1-p^{2r}\right)^{N\binom{n}{r}}\right).
\end{align*}
Next, from Chebyshev's inequality and the upper bound on $P\left((\bfy,\bfd')\notin\mathsf{Y_{cycle}}\right)$, it can be verified that  
\begin{align*}
    P(|\mathsf{Y^*_{cycle}}|<1)&<\frac{\mathsf{Var}\left[|\mathsf{Y^*_{cycle}}|\right]}{(\mathbb{E}\left[|\mathsf{Y^*_{cycle}}|\right])^2}<\frac{\cU_\mathsf{cycle}}{N2^n(1-\cU_\mathsf{cycle})}.
\end{align*}
Therefore, the results follows. 
\end{proof}

Hence, from Proposition~\ref{cycle} and Lemma~\ref{PMA}, it is highly improbable (vanishingly low probability) to find the true permutation using only the addresses and its noisy measurements (i.e., $\bfx$ and $\bfy$'s). 
In the next section, we see that by making use of the data parts, we can find the true permutation under certain mild assumptions.

\section{Uniqueness of the $N$-Permutation}
\label{sec:UniqPerm}

In this section, we study Problem \ref{prob1} when $\cS =\mathsf{BEC}(p)$. 
Specifically, in Lemmas~\ref{b_th} and~\ref{N_th}, we determine the values $\beta_{\mathsf{Th}}$ and $N_{\mathsf{Th}}$, respectively, such that for all $\beta\geq\beta_\mathsf{Th}$ and $N\geq N_{\mathsf{Th}}$, we are able to find the true permutation with high probability. The result is formally stated in Theorem~\ref{thm:prob1}.

For $\cS =\mathsf{BEC}(p)$, the task of identifying the true permutation $\pi$, can be split into two steps. We can first identify the \textit{partitioning} $\{\cS_N((\bfx_i,\bfd_i)):i\in[M]\}$ and then for each \textit{partition} ($\cS_N((\bfx_i,\bfd_i))$) identify the \textit{label}, viz. the channel input ($\bfx_i$), where $i \in [M]$. Hence, given $R_N' $ and $C$, we are able to find the true permutation if and only if there exists only one valid partitioning and one valid labelling.

Before formally defining partitioning and labelling, we introduce some notations. Let $\bfa_1, \bfa_2\in\{0,1\}^\ell$. For $i\in\{1,2\}$, let $\bfb_i \in\{0,1,*\}^\ell$ be the output of $\bfa_i$ through $\mathsf{BEC}(p)$. We denote the event of $\bfb_1$ and $\bfb_2$ agreeing at the non-erased positions by $\bfb_1 \cong \bfb_2$. For example, let $\bfa_1 = \texttt{00000}, \bfa_2 = \texttt{00011}$ and let $\bfb_1 = \texttt{0000*}, \bfa_2 = \texttt{000*1}$ then $\bfb_1 \Tilde{=} \bfb_2$. Furhter, by abuse of notation, we would denote the event of all sequences in $A\subseteq\cS_N(\bfa_1)$ agreeing at the non-erased positions with all sequences in $B\subseteq\cS_N(\bfa_2)$ by $A\cong B$, where $\cS_N(\bfa_i)$ denotes the multiset of channel outputs when $\bfa_i$ is transmitted $N$ times through the channel $\cS$, $i\in{1,2}$. A right node $(\bfy, \bfd') \in \cY$ is said to be \textbf{faulty} if there exists $(
\widetilde{\bfy},\widetilde{\bfd'})\in\cY\backslash\{(\bfy,\bfd')\}$  with $(\bfy, \bfd')\in\cS_N((\bfx,\bfd))$ and $(\widetilde{\bfy},\widetilde{\bfd'})\in\cS_N((\tx,\widetilde{\bfd}))$ such that $(\bfy, \bfd') \cong (\widetilde{\bfy},\widetilde{\bfd'})$. Let $\mathsf{Y_{faulty}}$ denote the multiset of such faulty nodes. In the next lemma, we calculate the probability of right node being faulty.

\begin{restatable}{lemma}{ProbRightFaulty}\label{right_faulty}

For $(\bfy,\bfd')\in\cY$, $P((\bfy,\bfd')\in \mathsf{Y_{faulty}})$ is 
\begin{small}
\begin{align*}
 1- \prod_{r=1}^{n} \left(1-(2p-p^2)^{r}\left(1-\frac{1}{2}(1-p)^2\right)^L\right)^{N\binom{n}{r}}.  
\end{align*}
\end{small}
\end{restatable}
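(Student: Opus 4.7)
The plan is to compute the complementary probability $P((\bfy,\bfd')\notin\mathsf{Y_{faulty}})$ by analyzing the compatibility event for each potential conflict partner and then combining them, mirroring the product-over-distances structure used in the proof of \Lref{PMA}.

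First, I would fix a target right node $(\bfy,\bfd')\in\cS_N((\bfx,\bfd))$. A candidate ``conflicting'' partner is any $(\widetilde{\bfy},\widetilde{\bfd'})\in\cS_N((\tx,\widetilde{\bfd}))$ with $\tx\neq\bfx$, and I would compute the single-pair probability $P((\bfy,\bfd')\cong(\widetilde{\bfy},\widetilde{\bfd'}))$ by a position-wise analysis. For the address part: at each of the $r=d_H(\bfx,\tx)$ disagreement positions, compatibility requires at least one of $\bfy_i$ and $\widetilde{\bfy}_i$ to be erased, which occurs with probability $1-(1-p)^2=2p-p^2$, while the remaining $n-r$ agreement positions contribute a factor of $1$, giving an address factor of $(2p-p^2)^r$. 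For the data part: since $\bfd$ and $\widetilde{\bfd}$ are independent and uniform on $\{0,1\}^L$, at each of the $L$ data positions they agree with probability $\tfrac{1}{2}$ (producing automatic compatibility) and disagree with probability $\tfrac{1}{2}$, in which case compatibility again requires at least one erasure with probability $2p-p^2$. Per-position compatibility therefore equals $\tfrac{1}{2}\cdot 1 + \tfrac{1}{2}\cdot(2p-p^2) = 1-\tfrac{1}{2}(1-p)^2$, and the data factor is $\bigl(1-\tfrac{1}{2}(1-p)^2\bigr)^L$. Multiplying the two independent contributions yields the single-pair compatibility probability $(2p-p^2)^r\bigl(1-\tfrac{1}{2}(1-p)^2\bigr)^L$.

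Next I would combine across all candidates. There are exactly $\binom{n}{r}$ choices of $\tx$ at Hamming distance $r$ from $\bfx$, and each such $(\tx,\widetilde{\bfd})$ produces $N$ noisy reads, for a total of $N\binom{n}{r}$ candidate events at distance $r$. Treating these events as independent Bernoulli trials (exactly as the cycle enumeration in \Lref{PMA} does), the probability that none of the candidate reads is compatible with $(\bfy,\bfd')$ factors as
\begin{equation*}
\prod_{r=1}^{n}\left(1-(2p-p^2)^{r}\left(1-\tfrac{1}{2}(1-p)^2\right)^{L}\right)^{N\binom{n}{r}},
\end{equation*}
and subtracting from $1$ delivers the claimed expression for $P((\bfy,\bfd')\in\mathsf{Y_{faulty}})$.

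The subtlest point is the last independence step. Events associated with distinct inputs $\tx$ are genuinely independent because the random data words $\widetilde{\bfd}$ and channel realizations are mutually independent across inputs; however, the $N$ reads stemming from the same $(\tx,\widetilde{\bfd})$ share the data word, so strictly speaking their compatibility events are only conditionally independent given $\widetilde{\bfd}$. I expect the proof to finesse this exactly as \Lref{PMA} does, by counting each potential conflict pair as an independent Bernoulli event carrying the marginal probability computed above; this ``per-pair'' bookkeeping is what produces the clean product form, and it is the main obstacle to turning the proposal into a fully rigorous identity rather than a close approximation.
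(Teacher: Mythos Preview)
Your proposal is correct and mirrors the paper's proof essentially line for line: the paper computes the same single-pair probability $(2p-p^2)^r\bigl(1-\tfrac{1}{2}(1-p)^2\bigr)^L$ via the same address/data position-wise analysis, and then writes the complementary probability as the product over $r$ with exponent $N\binom{n}{r}$. Your flagged independence subtlety is real---the compatibility events all share the common read $(\bfy,\bfd')$, and the $N$ reads of a fixed $(\tx,\widetilde{\bfd})$ share $\widetilde{\bfd}$---but the paper does not address it either; it simply asserts the product form, so your proposal is at least as rigorous as the original.
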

\begin{proof}    
     For $(\bfx,\bfd), (\tx,\widetilde{\bfd}) \in \cX$, let $(\bfy,\bfd')\in\cS_N((\bfx,\bfd))$ and $(\widetilde{\bfy},\widetilde{\bfd'}) \in \cS_N((\tx,\widetilde{\bfd}))$ with $d_H(\bfx,\tx)=r>0$. For $\bfy\cong\widetilde{\bfy}$, the positions where $\bfx$ and $\tx$ differ must be erased in at least one of them, which happens with probability $(1-(1-p)^2)^r=(2p-p^2)^r$. For index $i \in [L]$, the probability that both $\bfd'$ and $\widetilde{\bfd'}$ are not erased and disagree on $i$ is $\frac{1}{2}(1-p)^2$. Therefore, $P(\bfd' \cong \widetilde{\bfd'}) = \left(1-\frac{1}{2}(1-p)^2\right)^L$. Hence, $P \left((\bfy,\bfd') \cong(\widetilde{\bfy},\widetilde{\bfd'})\right) =(2p-p^2)^{r}\left(1-\frac{1}{2}(1-p)^2\right)^L$. Thus,
\begin{small}
\begin{align*}
P(\mathbb{I}_{(\bfy,\bfd')\not\in \mathsf{Y_{faulty}}}) = \prod_{r=1}^{n} \left(1-(2p-p^2)^{r}\left(1-\frac{1}{2}(1-p)^2\right)^L\right)^{N\binom{n}{r}}.  
\end{align*}
\end{small}
\end{proof}

\begin{definition}
     A \textbf{partitioning} $\cP=\{P_1, P_2, \ldots, P_M\}$ of $\cY$ is defined as the collection of disjoint submultisets of $\cY$, each of size $N$, such that for $i\in[M]$, for $(j,k)\in\binom{[N]}{2}$, $(\bfy_j, \bfd'_j) \cong (\bfy_k, \bfd'_k)$, where $(\bfy_j, \bfd'_j), (\bfy_k, \bfd'_k) \in P_i$. 
\end{definition}

We will refer to $\cP^* \triangleq \{\cS_N((\bfx_i,\bfd_i)):i\in[M]\}$ as the \textit{true partitioning} of $\cY$. Let $\mathbb{P}_\cY$ denote the set of all possible partitionings of $\cY$. Note that if $\vert\mathbb{P}_\cY\vert = 1$ then $\mathbb{P}_\cY = \{\cP^*\}$. Let $\cG' = (\cY, E')$. Now consider the graph, $\cG' = (\cX,E')$, where $\cY = R_N'$. For $(\bfy,\bfd'),(\Tilde{\bfy},\Tilde{\bfd'}) \in\cY, ((\bfy,\bfd'),(\Tilde{\bfy},\Tilde{\bfd'})\in E'$ if $(\bfy,\bfd') \Tilde{=}(\Tilde{\bfy},\Tilde{\bfd'})$.
Note that a partitioning $\cP \in \mathbb{P}_\cY$ corresponds to partitioning the graph $\cG'$ into $M$ cliques each of size $N$. 

\begin{proposition}
    $\vert\mathbb{P}_\cY\vert=1$ if and only if there exists a unique partitioning of the graph $\cG'$ into $M$ cliques each of size $N$. 
\end{proposition}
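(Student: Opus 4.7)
The plan is to establish a bijection between the set $\mathbb{P}_\cY$ of partitionings of $\cY$ and the set of partitions of the vertex multiset of $\cG'$ into $M$ cliques each of size $N$. Once this bijection is in hand, the biconditional on uniqueness is immediate and the proposition follows. I expect no real obstacle here; the argument is essentially an unpacking of the two definitions and carries no hidden combinatorial content. The main thing to keep straight is that $\cY$ is a multiset (of size $MN$), so ``partition of $\cG'$'' means partition of its vertex multiset, and cliques are understood as submultisets.

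First, I would recall that the vertex multiset of $\cG'$ is $\cY$ and that by construction a pair of vertices is adjacent in $\cG'$ precisely when the two reads satisfy the relation $\cong$. Consequently, any submultiset $Q\subseteq \cY$ is a clique of $\cG'$ if and only if every pair of its elements satisfies $\cong$. This is exactly the condition imposed inside each part $P_i$ in the definition of a partitioning of $\cY$, and the cardinality condition $|P_i|=N$ matches the required clique size.

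With this observation in place, the two directions are routine. Given $\cP=\{P_1,\ldots,P_M\}\in \mathbb{P}_\cY$, each $P_i$ has size $N$ and its elements are pairwise $\cong$-related, so $P_i$ is a size-$N$ clique of $\cG'$; since the $P_i$'s are disjoint with total cardinality $MN=|\cY|$, they form a partition of $\cG'$ into $M$ cliques of size $N$. Conversely, given any partition $\{Q_1,\ldots,Q_M\}$ of $\cG'$ into $M$ cliques of size $N$, the $Q_i$'s are disjoint size-$N$ submultisets of $\cY$ whose elements are pairwise $\cong$-related, which is precisely the definition of a partitioning in $\mathbb{P}_\cY$. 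The two assignments are mutual inverses, so the number of partitionings of $\cY$ equals the number of partitions of $\cG'$ into $M$ size-$N$ cliques; in particular, $|\mathbb{P}_\cY|=1$ if and only if this clique partition is unique.
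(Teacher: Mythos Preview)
Your proof is correct. In the paper this proposition is stated without proof; the sentence immediately preceding it already records the key observation, namely that a partitioning $\cP\in\mathbb{P}_\cY$ corresponds to a partition of $\cG'$ into $M$ cliques of size $N$, and the proposition is then asserted as a direct consequence. Your argument simply makes this correspondence explicit as a bijection and is exactly what the paper leaves implicit.
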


In the next lemma, we derive a threshold on $\beta$ such that for $\beta\geq\beta_{\mathsf{Th}}, \mathbb{P}_\cY = \{\cP^*\}$ with probability at least $1-\epsilon_1$.

\begin{restatable}{lemma}{BoundBeta}\label{b_th}

For $\beta\geq\beta_{\mathsf{Th}} \triangleq \frac{\log_{2}\left(\frac{N((1+2p-p^2)^n-1)}{\sqrt[N]{\epsilon_1/2^n}}\right)}{n(1-\log_2(1+2p-p^2))}$, we have that $\mathbb{P}_\cY = \{\cP^*\}$ with probability at least $1-\epsilon_1$.
\end{restatable}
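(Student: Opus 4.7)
The plan is to reduce uniqueness of the true partitioning to a combinatorial condition that can be controlled probabilistically via Lemma~\ref{right_faulty} together with a two-level union bound.

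\emph{Step 1 (Sufficient condition for uniqueness).} I will first establish that $\mathbb{P}_\cY = \{\cP^*\}$ holds whenever every strand $i\in[M]$ contains at least one non-faulty read. Indeed, if $(\bfy_{i,k}, \bfd'_{i,k})$ is non-faulty, then its neighborhood in the compatibility graph $\cG'$ consists precisely of the remaining $N-1$ reads of strand $i$, which are pairwise compatible with it. Hence in any valid partitioning the $N$-clique containing $(\bfy_{i,k}, \bfd'_{i,k})$ must be $\cS_N((\bfx_i, \bfd_i))$; applying this strand by strand forces $\cP = \cP^*$.

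\emph{Step 2 (Bounding the bad event).} Let $B_i$ be the event that all $N$ reads of strand $i$ are faulty. If $B_i$ occurs, then for each $k$ there exists a witness $w_k$ among the reads outside strand $i$ with $(\bfy_{i,k}, \bfd'_{i,k}) \cong w_k$. A union bound over witness tuples then gives
\begin{equation*}
P(B_i) \le \sum_{(w_1, \ldots, w_N)} P\!\left(\bigcap_{k=1}^{N} (\bfy_{i,k}, \bfd'_{i,k}) \cong w_k\right).
\end{equation*}
Following the per-position analysis used in the proof of Lemma~\ref{right_faulty}, the single-compatibility probability for a witness drawn from a strand at Hamming distance $r$ from $\bfx_i$ equals $(2p-p^2)^{r}(\gamma/2)^{L}$, where $\gamma \triangleq 1+2p-p^2$. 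Using the independence of channel erasures across distinct reads together with the identity $\sum_{\bfx_j\neq \bfx_i}(2p-p^2)^{d_H(\bfx_i,\bfx_j)} = \gamma^n - 1$, the joint probability factorizes across $k$ and the sum collapses to $\left[N(\gamma^n-1)(\gamma/2)^{L}\right]^N$.

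\emph{Step 3 (Final union bound and algebra).} A union bound over the $M = 2^n$ strands combined with Step~1 yields
\begin{equation*}
P(\mathbb{P}_\cY \neq \{\cP^*\}) \le 2^n\left[N(\gamma^n-1)(\gamma/2)^{L}\right]^N.
\end{equation*}
Requiring the right-hand side to be at most $\epsilon_1$, substituting $L = \beta n$, taking $\log_2$ (noting $\gamma < 2$), and rearranging produces exactly $\beta \ge \beta_\mathsf{Th}$.

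The principal technical delicacy is the factorization asserted in Step~2: when two witnesses $w_k, w_{k'}$ happen to come from the same other strand, their compatibility events share that strand's random data block and become positively correlated. Resolving this cleanly requires a careful conditional analysis exploiting the fact that the marginal per-position agreement probability $\gamma/2$ does not actually depend on $\bfd_i$, so that after conditioning on $\bfd_i$ the compatibility events involving disjoint witness reads are exactly independent; this is the step I expect to demand the most care.
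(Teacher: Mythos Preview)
Your proposal follows essentially the same route as the paper: the same sufficient condition (every strand has at least one non-faulty read), a union bound over witnesses to control the single-read faulty probability (the paper phrases this step as the Weierstrass inequality applied to the expression from Lemma~\ref{right_faulty}), the factorization $P(\text{all $N$ reads of a strand faulty})=\big(P(\text{one read faulty})\big)^N$, and a final Markov/union bound over the $2^n$ strands to arrive at the same inequality you solve for $\beta$. The paper simply writes the factorization as a bare equality $\mathbb{E}[\mathbb{I}_{\cS_N(\bfx,\bfd)\subset \mathsf{Y_{faulty}}}]=(P((\bfy,\bfd')\in \mathsf{Y_{faulty}}))^N$ without justification, so your explicit flagging of this step already goes beyond what the paper provides.

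That said, your proposed fix for the factorization does not quite close the gap: after conditioning on $\bfd_i$, the events $\{(\bfy_{i,k},\bfd'_{i,k})\cong w_k\}$ for distinct $k$ are still dependent whenever $w_k$ and $w_{k'}$ are reads of the \emph{same} other strand $j$, since both events then depend on the common random data block $\bfd_j$. A per-position computation (with $u=2p-p^2$) gives conditional joint agreement probability $\tfrac{1}{2}(1+u^2)$ versus the product $\big(\tfrac{1+u}{2}\big)^2$, and $(1-u)^2\ge 0$ shows the correlation is positive, so the joint probability is \emph{at least} the product rather than at most. Thus the inequality $\sum_{(w_1,\ldots,w_N)}P\big(\bigcap_k E_{k,w_k}\big)\le \big(\sum_w P(E_{1,w})\big)^N$ is not established by conditioning on $\bfd_i$ alone; the paper's proof carries exactly the same unaddressed issue.
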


\begin{proof}    
Note that for every $\bfx\in\cX$, if there exists at least one $(\bfy,\bfd')\in\cS_N(\bfx,\bfd)$ such that $(\bfy,\bfd')\not\cong\cY\backslash\cS_N(\bfx,\bfd)$ then the only valid partitioning is $\cP^*$. Let $\mathsf{X_{faulty}}$ denote the set of left nodes with $\cS_N(\bfx, \bfd)\subset\mathsf{Y_{faulty}}$. 
From Markov Inequality, 
\begin{align*}
     P(\bfx \in \mathsf{X_{faulty}}) 
     & = P(\mathbb{I}_{\cS_N(\bfx,\bfd)\subset \mathsf{Y_{faulty}}}\geq1)\\
     & \leq\mathbb{E}\left[\mathbb{I}_{\cS_N(\bfx,\bfd)\subset \mathsf{Y_{faulty}}}\right] 
     = \left(P(\mathbb{I}_{(\bfy,\bfd')\subset \mathsf{Y_{faulty}}})\right)^N.
\end{align*}
Therefore, from Lemma \ref{right_faulty}, $P(\bfx \in \mathsf{X_{faulty}})$ is at most 
\begin{small}
\begin{align*}
   \left(1-\prod_{r=1}^{n} \left(1-(2p-p^2)^{r}\left(1-\frac{1}{2}(1-p)^2\right)^L\right)^{N\binom{n}{r}}\right)^N.
\end{align*}
\end{small}
From linearity of expectation, $\mathbb{E}\left[|\mathsf{X_{faulty}}|\right]$ is at most
\begin{small}
\begin{align*}
2^n \left(1-\prod_{r=1}^{n} \left(1-(2p-p^2)^{r}\left(1-\frac{1}{2}(1-p)^2\right)^L\right)^{N\binom{n}{r}}\right)^N.
\end{align*}
\end{small}
Using Weierstrass inequality, we have that  
\begin{small}
    \begin{align*}
  \mathbb{E}\left[|\mathsf{X_{faulty}}|\right] &\leq 2^n\left(\sum_{r=1}^n {N\binom{n}{r}} (2p-p^2)^{r}\left(1-\frac{1}{2}(1-p)^2\right)^L\right)^N. 
\end{align*}
\end{small}
From Markov inequality, $P(|\mathsf{X_{faulty}}|\geq 1)\leq \mathbb{E}\left[|\mathsf{X_{faulty}}|\right]$. Hence,  $ P(|\mathsf{X_{faulty}}|<1)$ is at least
\begin{align*}
   1- 2^n\left(\sum_{r=1}^n {N\binom{n}{r}} (2p-p^2)^{r}\left(1-\frac{1}{2}(1-p)^2\right)^L\right)^N. 
\end{align*}
Lastly, it can be verified that $P(|\mathsf{X_{faulty}}|<1)\geq 1-\epsilon_1$ if $ \log_{(1-\frac{1}{2}(1-p)^2)}\left(\frac{\sqrt[N]{\epsilon_1/2^n}}{N((2p-p^2+1)^n-1)}\right)<L=\beta n$. 
\end{proof}

\begin{corollary}
For given $n, \epsilon_1, p$, $\beta_{\mathsf{Th}}$  is minimum at $N = \ln\left(\frac{2^n}{\epsilon_1}\right)$.
\end{corollary}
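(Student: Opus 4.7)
The plan is to treat $N$ as a positive real parameter and minimize $\beta_{\mathsf{Th}}(N)$ by standard one-variable calculus. The key initial observation is that the denominator $n(1-\log_2(1+2p-p^2))$ is a positive constant that does not involve $N$ (since $0<p<1$ implies $1<1+2p-p^2<2$), so it suffices to minimize the numerator
\[
F(N)\;\triangleq\;\log_{2}\!\left(\frac{N\bigl((1+2p-p^2)^n-1\bigr)}{(\epsilon_1/2^n)^{1/N}}\right).
\]
The hard (or at least careful) step is bookkeeping the sign of $\log_2(\epsilon_1/2^n)$, which is negative under the mild assumption $\epsilon_1<2^n$; this is what makes the critical point a genuine minimum rather than a saddle or maximum.

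First I would split $F$ using $\log$-laws into $N$-dependent and $N$-independent parts:
\[
F(N)\;=\;\log_2 N \;+\; \underbrace{\log_2\!\bigl((1+2p-p^2)^n-1\bigr)}_{\text{constant in }N} \;-\;\frac{1}{N}\log_2\!\bigl(\epsilon_1/2^n\bigr).
\]
Writing $c\triangleq\log_2(2^n/\epsilon_1)>0$, it then remains to minimize
\[
g(N)\;=\;\log_2 N \;+\; \frac{c}{N}.
\]

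Next I would differentiate with respect to $N$:
\[
g'(N)\;=\;\frac{1}{N\ln 2}\;-\;\frac{c}{N^{2}},\qquad g''(N)\;=\;-\frac{1}{N^{2}\ln 2}\;+\;\frac{2c}{N^{3}}.
\]
Setting $g'(N)=0$ gives $N=c\ln 2=\ln(2^n/\epsilon_1)$, which is exactly the value claimed in the corollary. Substituting this $N$ into $g''$ yields $g''(N)=\tfrac{1}{N^2\ln 2}>0$, so the critical point is a strict local minimum; because $g(N)\to\infty$ both as $N\to 0^{+}$ (driven by $c/N$) and as $N\to\infty$ (driven by $\log_2 N$), this local minimum is in fact the global minimum on $(0,\infty)$. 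Pulling the constant factors from the denominator of $\beta_{\mathsf{Th}}$ and the $N$-independent summand of $F$ back in does not alter the minimizer, which proves the claim.

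I do not expect any real obstacle here; the only subtlety is noting that $\epsilon_1/2^n<1$ (so $c>0$), without which the optimization problem would be degenerate. If the authors wish to state the result for integer $N$, the natural follow-up would be to round $\ln(2^n/\epsilon_1)$ and observe via the convexity of $g$ that the optimal integer choice is one of the two nearest integers, but this is a routine remark rather than part of the corollary itself.
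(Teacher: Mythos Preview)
Your proof is correct and follows the same approach as the paper: both reduce to minimizing the numerator (the denominator being a positive constant independent of $N$) and locate the critical point by differentiation. The paper simply asserts that the minimizer ``can be verified'' to be $N=\ln(2^n/\epsilon_1)$, whereas you explicitly carry out the calculus (including the second-derivative check and the boundary behavior), so your argument is a fleshed-out version of the same idea.
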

\begin{proof}

Observe that $$
\underset{N\in\mathbb{Z}_{+}}{\argmin} ~\beta_{\mathsf{Th}} = \underset{N\in\mathbb{Z}_{+}}{\argmin}~\log_{2}\left(\frac{N((1+2p-p^2)^n-1)}{\sqrt[N]{\epsilon_1/2^n}}\right),
$$
which can be verified to be maximum at $N = \ln\left(\frac{2^n}{\epsilon_1}\right)$.
\end{proof}

\begin{figure}[H]
    \centering
    \includegraphics[width=0.5\textwidth]{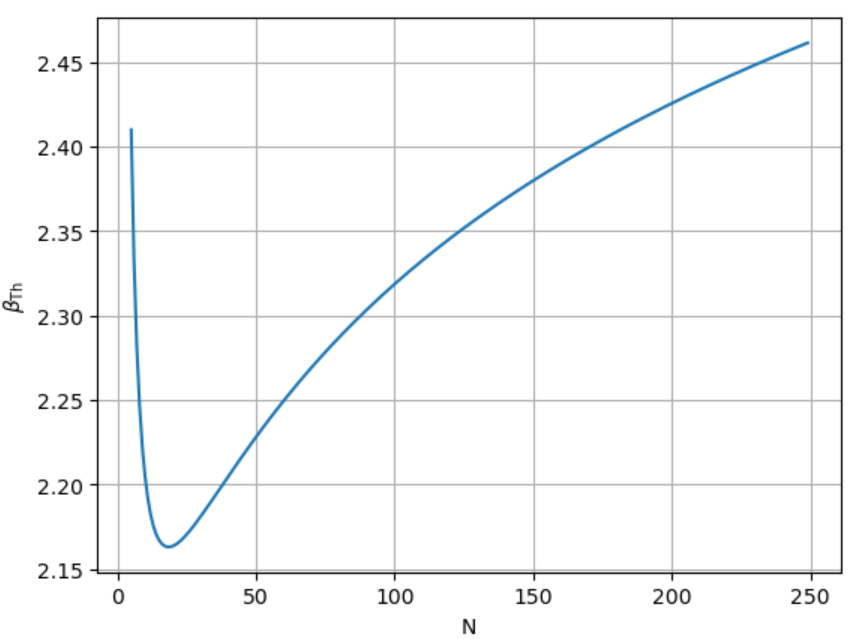}
    \label{fig:beta_N}
    \caption{Plot of $\beta_{\mathsf{Th}}$ versus $N$ for $n = 20, p = 0.3, \epsilon_1 = 0.01$.}
\end{figure}

\begin{corollary}\label{b_th_n}
For given $N, \epsilon_1, p$, $\beta_{\mathsf{Th}}= \cO(1)$. Specifically, 
\begin{align*}
\beta_{\mathsf{Th}} < \frac{\log_2(N(1+2p-p^2)) + \frac{1-\log_2(\epsilon_1)}{N}}{1-\log_2(1+2p-p^2)}.
\end{align*}
\end{corollary}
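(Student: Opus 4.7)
I would start from the explicit expression for $\beta_{\mathsf{Th}}$ given in Lemma~\ref{b_th} and perform one crude upper bound to kill the dependence on $n$. Expanding the logarithm in the numerator yields
\[
\log_2(N) \;+\; \log_2\!\bigl((1+2p-p^2)^n - 1\bigr) \;+\; \frac{n - \log_2(\epsilon_1)}{N},
\]
so the only nontrivial step is to control the middle term, which is where the linear-in-$n$ growth of the numerator is concentrated.

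The single substantive estimate I would invoke is the trivial bound $(1+2p-p^2)^n - 1 < (1+2p-p^2)^n$, valid because $1+2p-p^2 > 1$ for $p\in(0,1)$. Taking $\log_2$ gives $\log_2\!\bigl((1+2p-p^2)^n - 1\bigr) < n\log_2(1+2p-p^2)$. The whole point is that, after this substitution, this term carries exactly the factor of $n$ that the denominator $n(1-\log_2(1+2p-p^2))$ will cancel, producing a clean constant $\log_2(1+2p-p^2)/(1-\log_2(1+2p-p^2))$.

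After dividing through by $n(1-\log_2(1+2p-p^2))$, which is positive since $1+2p-p^2\in(1,2)$ forces $\log_2(1+2p-p^2)\in(0,1)$, I split the result into three additive pieces. One piece is already clean as noted. The other two pieces, arising from $\log_2(N)/n$ and $-\log_2(\epsilon_1)/(nN)$, each carry a spurious $1/n$; here I would use $n\geq 1$ together with $\log_2(N)\geq 0$ and $-\log_2(\epsilon_1)\geq 0$ (the latter because $\epsilon_1\in(0,1]$) to weaken these to $\log_2(N)$ and $-\log_2(\epsilon_1)/N$, respectively. Finally, regrouping $\log_2(N)+\log_2(1+2p-p^2) = \log_2(N(1+2p-p^2))$ and $1-\log_2(\epsilon_1)$ produces the claimed inequality, and the $\cO(1)$ assertion is immediate since the right-hand side no longer involves $n$.

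I do not anticipate a real obstacle here; the argument is entirely algebraic. The only step where care is needed is sign tracking, namely the observation that $-\log_2(\epsilon_1)\geq 0$ is precisely what makes the replacement $1/(nN) \mapsto 1/N$ go in the correct (upper-bound) direction.
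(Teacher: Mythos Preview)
Your proposal is correct and follows essentially the same route as the paper: both start from the explicit $\beta_{\mathsf{Th}}$ of Lemma~\ref{b_th}, apply the single estimate $(1+2p-p^2)^n-1<(1+2p-p^2)^n$, expand the logarithm, and then use $n\ge 1$ (with the appropriate sign checks you note) to drop the residual $1/n$ factors. The paper's write-up merely presents these same steps as a chain of displayed (in)equalities without the verbal commentary.
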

\begin{proof}
From Lemma \ref{b_th}, $\beta_{\mathsf{Th}} = \frac{\log_{2}\left(\frac{N((1+2p-p^2)^n-1)}{\sqrt[N]{\epsilon_1/2^n}}\right)}{n(1-\log_2(1+2p-p^2))}$. Therefore, we have that 
\begin{align*}
    \beta_{\mathsf{Th}} &< \frac{\log_{2}\left(\frac{N((1+2p-p^2)^n)}{\sqrt[N]{\epsilon_1/2^n}}\right)}{n(1-\log_2(1+2p-p^2))}\\
    &= \frac{\log_{2}(N) + n \log\left(1+2p-p^2\right)-\left(\frac{\log(\epsilon_1)-n}{N}\right)}{n(1-\log_2(1+2p-p^2))} \\
    &=  \frac{\log_2(1+2p-p^2)+\frac{1}{N}}{(1-\log_2(1+2p-p^2)} + \frac{\log_2(N)-\frac{\log_2(\epsilon_1)}{N}}{n(1-\log_2(1+2p-p^2)}  \\
    &< \frac{\log_2(N(1+2p-p^2)) + \frac{1-\log_2(\epsilon_1)}{N}}{1-\log_2(1+2p-p^2)}.
\end{align*}
\end{proof}

\begin{definition}
    Given a partitioning $\cP=\{P_1,P_2, \ldots, P_M\}$, we define a \textbf{labelling}, denoted by $\cL$, as a length-$M$ vector of distinct addresses from $C$ such that $\cL[i] \in \{\bfx: \forall (\bfy,\bfd')\in P_i, P(\bfx|\bfy)>0, \}$, where $\cL[i]$ denotes the $i$-th element of $\cL$, and $i\in[M]$.  
\end{definition}

We denote the set of all possible labellings for a given partitioning $\cP$ by $\mathbb{L}_{\cP,\cY}$. Given the true partitioning $\cP^*$, we define the \textit{true labelling}, denoted by $\cL^*$, as the labelling in which for each partition $\cS_N((\bfx_i,\bfd_i))$, the assigned label is $\bfx_i$, where $i\in[M]$. Note that if $\cP\neq\cP^*$ then $\cL^*\notin\mathbb{L}_{\cP,\cY}$. Further, if $|\mathbb{L}_{\cP^*,\cY}| = 1$ then $\mathbb{L}_{\cP^*,\cY} = \{\cL^*\}$. Let $\cG'' = (\cX,E'')$, where $\cX = C$. There is a directed edge $\bfx\rightarrow\tx$ if all of the $N$ channel outputs of $\bfx$ are erased at the positions where $\bfx$ and $\tx$ differ, i.e., $\{\tx\} \in \{\bigcap_{(\bfy,\bfd')\in\cS_N(\bfx,\bfd)}E_{(\bfy,\bfd')}\}$. 

\begin{proposition}
$|\mathbb{L}_{\cP^*,\cY}| = 1$ if and only if there are no directed cycles in $\cG''$.    
\end{proposition}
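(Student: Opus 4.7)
The plan is to establish a bijection between the labellings in $\mathbb{L}_{\cP^*,\cY}$ and permutations of $[M]$, and then identify ``non-trivial'' alternative labellings with directed cycles of $\cG''$. For the bijection, note that any $\cL \in \mathbb{L}_{\cP^*,\cY}$ consists of $M$ distinct addresses from $C=\cX$, so writing $\cL[i]=\bfx_{\sigma(i)}$ uniquely determines a permutation $\sigma$ of $[M]$; the true labelling $\cL^*$ corresponds to the identity permutation. The validity condition that $\cL[i]$ be consistent with every read $(\bfy,\bfd')\in \cS_N((\bfx_i,\bfd_i))$ is, under $\mathsf{BEC}(p)$, equivalent to requiring that all $N$ reads of $\bfx_i$ be erased at every coordinate where $\bfx_i$ and $\bfx_{\sigma(i)}$ differ---which is precisely the defining condition of the directed edge $\bfx_i \to \bfx_{\sigma(i)}$ in $\cG''$ (and holds vacuously when $\sigma(i)=i$).

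Given this translation, I would argue the two directions as follows. For the $(\Leftarrow)$ direction, I proceed contrapositively: if $|\mathbb{L}_{\cP^*,\cY}|>1$, pick $\cL\neq \cL^*$ with associated permutation $\sigma\neq \mathrm{id}$. Decomposing $\sigma$ into cycles, there must exist a non-trivial cycle $(i_1,i_2,\ldots,i_k)$ with $k\geq 2$ and $\sigma(i_j)=i_{j+1}$ (indices cyclic); the translation step above yields directed edges $\bfx_{i_j}\to \bfx_{i_{j+1}}$ for every $j$, producing a directed cycle in $\cG''$. For the $(\Rightarrow)$ direction, suppose $\bfx_{i_1}\to \bfx_{i_2}\to\cdots\to \bfx_{i_k}\to \bfx_{i_1}$ is a directed cycle of $\cG''$ with $k\geq 2$. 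I would define $\cL$ by $\cL[i_j]=\bfx_{i_{j+1}}$ for $j\in[k]$ (indices cyclic) and $\cL[i]=\bfx_i$ for $i\notin \{i_1,\ldots,i_k\}$; the entries are distinct (the indices in the cycle are simply permuted among themselves while the rest are untouched), and by the translation step each $\cL[i_j]$ is compatible with every read in $\cS_N((\bfx_{i_j},\bfd_{i_j}))$. Hence $\cL\in \mathbb{L}_{\cP^*,\cY}\setminus\{\cL^*\}$, so $|\mathbb{L}_{\cP^*,\cY}|>1$.

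The only step requiring any care---and which I do not anticipate will be a serious obstacle---is the equivalence between the compatibility condition and the edge condition of $\cG''$. Since the data symbols $\bfd'$ do not enter the conditional probability $P(\bfx\mid \bfy)$ under $\mathsf{BEC}(p)$, compatibility of an address $\tx$ with a read $(\bfy,\bfd')$ means that $\tx$ matches $\bfy$ at every non-erased position; requiring this for every read in $\cS_N((\bfx,\bfd))$ forces all $N$ reads to be erased at each coordinate where $\tx\neq \bfx$, which is exactly the condition defining the edge $\bfx\to \tx$ in $\cG''$.
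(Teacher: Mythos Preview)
Your argument is correct. The paper states this proposition without proof, treating it as an immediate consequence of the definitions of $\mathbb{L}_{\cP^*,\cY}$ and $\cG''$; your write-up supplies precisely the natural justification---identify labellings with permutations of $[M]$, translate the compatibility constraint into the edge relation of $\cG''$, and use cycle decomposition of permutations---and fills in the details the paper leaves implicit.
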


\begin{figure}[H]
    \centering
\includegraphics[width=0.5\textwidth]{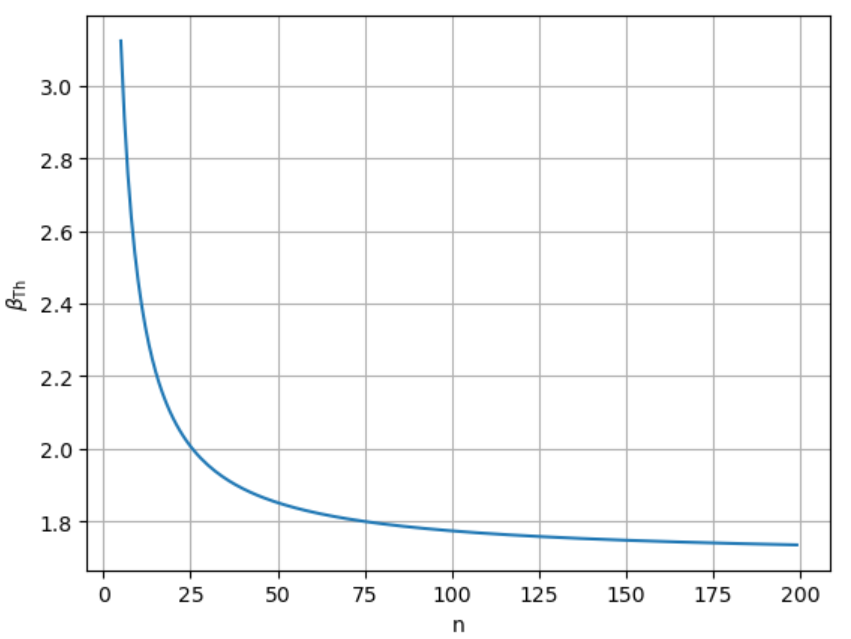}
    \label{fig:beta_n}
    \caption{Plot of $\beta_{\mathsf{Th}}$ versus $n$ for $N = 2, p = 0.2, \epsilon_1 = 0.01$.}
\end{figure}

In the next lemma, we derive  a threshold on $N$ such that for $N\geq N_\mathsf{Th}, \mathbb{L}_{\cP^*,\cY} = \{\cL^*\}$ with probability at least $1-\epsilon_2$.
\begin{restatable}{lemma}{BoundOnN}\label{N_th}
     For $N\geq N_\mathsf{Th} \triangleq\frac{\log_2\left(\sqrt[n]{\frac{\epsilon_2+2^{n}}{2^{n}}}-1\right)}{\log_2(p)}$, 
     we have that $\mathbb{L}_{\cP^*,\cY} = \{\cL^*\}$ with probability at least $1-\epsilon_2$.
\end{restatable}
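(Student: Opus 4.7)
By the proposition immediately preceding the lemma, it suffices to bound the probability that the directed graph $\cG''$ contains at least one directed cycle. The plan is to use the simple sufficient condition that $\cG''$ has no edges at all (no edges $\Rightarrow$ no cycles), and then apply Markov's inequality to the number of edges in $\cG''$.

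First I would compute the probability that a single directed edge $\bfx \to \tx$ exists. By the definition of $\cG''$, the edge appears exactly when every one of the $N$ noisy reads of $\bfx$ is erased at all positions where $\bfx$ and $\tx$ differ. If $r = d_H(\bfx,\tx)$, each read is erased at those $r$ coordinates independently with probability $p^r$, so
\begin{equation*}
P(\bfx \to \tx) = p^{Nr}.
\end{equation*}
Summing over all possible right endpoints $\tx \in \{0,1\}^n \setminus \{\bfx\}$ grouped by Hamming distance $r$, the expected out-degree of any fixed $\bfx$ is $\sum_{r=1}^{n} \binom{n}{r} p^{Nr} = (1+p^N)^n - 1$. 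By linearity of expectation, the expected total number of directed edges in $\cG''$ is
\begin{equation*}
\mathbb{E}\left[|E''|\right] = 2^n\left((1+p^N)^n - 1\right).
\end{equation*}

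Next I would invoke Markov's inequality to conclude
\begin{equation*}
P(\text{$\cG''$ has a directed cycle}) \le P(|E''| \ge 1) \le 2^n\left((1+p^N)^n - 1\right),
\end{equation*}
so it suffices to choose $N$ large enough that the right-hand side is at most $\epsilon_2$. Rearranging,
\begin{equation*}
(1+p^N)^n \le 1 + \frac{\epsilon_2}{2^n} = \frac{2^n + \epsilon_2}{2^n} \iff p^N \le \sqrt[n]{\tfrac{2^n+\epsilon_2}{2^n}} - 1,
\end{equation*}
and since $\log_2 p < 0$, taking $\log_2$ on both sides and flipping the inequality gives exactly
\begin{equation*}
N \ge \frac{\log_2\!\left(\sqrt[n]{(2^n+\epsilon_2)/2^n} - 1\right)}{\log_2(p)} = N_{\mathsf{Th}},
\end{equation*}
which is the claimed threshold. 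Combining with the proposition yields $\mathbb{L}_{\cP^*,\cY} = \{\cL^*\}$ with probability at least $1-\epsilon_2$.

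The only mildly delicate point is verifying that $P(\bfx \to \tx) = p^{Nr}$ \emph{unconditionally}, i.e., that the erasures are independent of the random data parts $\bfd_i$ (which is immediate, since the $\mathsf{BEC}(p)$ erasure pattern is independent of the transmitted symbols), and that using "no edges" as a surrogate for "no cycles" is lossless at the level of our Markov bound (it is, since any cycle forces at least one edge, and we only need a first-moment upper bound). No more refined argument (e.g., counting directed $2$-cycles or using second moments) is needed to recover the claimed $N_{\mathsf{Th}}$.
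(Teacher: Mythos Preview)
Your proof is correct and follows essentially the same strategy as the paper: bound the probability that $\cG''$ contains a directed cycle by the probability that $\cG''$ contains any edge at all, then control this via a first-moment (Markov) bound, arriving at the inequality $2^n\bigl((1+p^N)^n-1\bigr)\le\epsilon_2$ and solving for $N$.

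There is one small but genuine difference worth noting. The paper counts \emph{vertices with at least one outgoing edge}: it computes $P(\bfx\notin\mathsf{X_{faulty}})=\prod_{r=1}^{n}(1-p^{rN})^{\binom{n}{r}}$ exactly, and then invokes the Weierstrass product inequality to upper-bound $\mathbb{E}[|\mathsf{X_{faulty}}|]$ by $2^n\bigl((1+p^N)^n-1\bigr)$. You instead count \emph{edges} directly, so linearity of expectation gives $\mathbb{E}[|E''|]=2^n\sum_{r=1}^{n}\binom{n}{r}p^{Nr}=2^n\bigl((1+p^N)^n-1\bigr)$ with equality and no auxiliary inequality. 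Both routes land on the same Markov bound, but yours is slightly more elementary in that it sidesteps the Weierstrass step entirely; the paper's route, on the other hand, yields the exact expression for $P(\bfx\in\mathsf{X_{faulty}})$ along the way, which could be useful if one wanted a sharper (non-union-bound) estimate later.
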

\begin{proof}    
Let $\mathsf{X_{faulty}}$ denote the set of nodes in $\cG''$ that have at least one outgoing edge. For $\bfx,\tx \in \cX$, let $r=d_H(\bfx,\tx)$. Note that $\bfx\rightarrow\tx$ happens with probability $p^{rN}$. Therefore, the probability that $\bfx$ has no outgoing edges is $\prod_{r=1}^{n} (1-p^{rN})^{\binom{n}{r}}$. Hence, from linearity of expectation,
\begin{align*}
  \mathbb{E}\left[|\mathsf{X_{faulty}}|\right] &= \sum_{\bfx\in \cX}\mathbb{E}[\mathbb{I}_{\{\bfx\in\mathsf{X_{faulty}}\}}]\\
  &= 2^n\left(1-\prod_{r=1}^{n}\left(1-p^{rN}\right)^{\binom{n}{r}}\right).
  \intertext{From Weierstrass inequality, we have that  $\prod_{r=1}^{n}\left(1-p^{2rN}\right)^{\binom{n}{r}}\geq 1-\sum_{r=1}^{n} {n \choose r}p^{2rN}$. Therefore,}
  \mathbb{E}\left[|\mathsf{X_{faulty}}|\right] &\leq  2^n\left(\sum_{r=1}^{n} {n \choose r}p^{rN}\right)= 2^n\left((1+p^N)^n-1\right). 
\end{align*}
From Markov inequality, $P(|\mathsf{X_{faulty}}|\geq 1)\leq \mathbb{E}\left[|\mathsf{X_{faulty}}|\right]$. Hence,  
\begin{align*}
    P(|\mathsf{X_{faulty}}|<1)\geq 1- 2^n\left((1+p^N)^n-1\right). 
\end{align*}
Lastly, it can be verified that $P(|\mathsf{X_{faulty}}|<1)\geq 1-\epsilon_2$ if $N > \log_p (\sqrt[n]{\frac{\epsilon_2+2^{n}}{2^{n}}}-1)$. 

\end{proof}

\begin{corollary}\label{N_Th_n}
For given, $p, \epsilon_2$, $N_{\mathsf{Th}} = \Theta(n)$. Specifically,  
\begin{align*}
    \frac{n+\log_2\left(\frac{n}{\epsilon_2\ln(2)}\right)}{\log_2\left(\frac{1}{p}\right)}>N_\mathsf{Th}&>\frac{n+\log_2\left(\frac{n}{\epsilon_2}\right)}{\log_2\left(\frac{1}{p}\right)}.
\end{align*}
\end{corollary}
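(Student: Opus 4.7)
The plan is to start from the closed-form expression for $N_{\mathsf{Th}}$ in Lemma~\ref{N_th}, namely
$$N_\mathsf{Th} \;=\; \frac{\log_2\!\left((1+a)^{1/n}-1\right)}{\log_2(p)} \;=\; -\frac{\log_2\!\left((1+a)^{1/n}-1\right)}{\log_2(1/p)},$$
where $a\triangleq \epsilon_2/2^n$, and then squeeze the quantity $(1+a)^{1/n}-1$ between two elementary expressions in $a/n$. Once we have such a sandwich, dividing by $\log_2(1/p)>0$ (which is positive because $p\in(0,1)$) yields the stated two-sided bound, and noting that both sides are of the form $\frac{n+O(\log n)}{\log_2(1/p)}$ gives $N_\mathsf{Th}=\Theta(n)$.

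For the lower bound on $N_\mathsf{Th}$, I would apply Bernoulli's inequality for fractional exponents: since $1/n\in(0,1]$ and $a\geq 0$, we have $(1+a)^{1/n}\leq 1+a/n$. Inverting and taking $\log_2$ gives $\log_2\!\left(1/((1+a)^{1/n}-1)\right)\geq \log_2(n/a)= n+\log_2(n/\epsilon_2)$, and dividing by $\log_2(1/p)$ yields the lower bound of the corollary.

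For the upper bound, I would go in the opposite direction using $e^x\geq 1+x$, which gives $(1+a)^{1/n}=e^{\ln(1+a)/n}\geq 1+\ln(1+a)/n$ and therefore $(1+a)^{1/n}-1\geq \ln(1+a)/n$. Next, I would use the simple linear bound $\ln(1+a)\geq a\ln 2$ valid for $a\in[0,1]$ (which is immediate from concavity, since $f(a)=\ln(1+a)-a\ln 2$ vanishes at the endpoints $a=0$ and $a=1$ and is concave on the interval; and $a=\epsilon_2/2^n\in(0,1]$ for any reasonable $\epsilon_2$). This delivers $(1+a)^{1/n}-1\geq a\ln(2)/n$, and consequently $\log_2\!\left(1/((1+a)^{1/n}-1)\right)\leq n+\log_2\!\left(n/(\epsilon_2\ln 2)\right)$, which upon division by $\log_2(1/p)$ gives the upper bound in the corollary.

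The only mildly delicate step is the upper bound on $N_\mathsf{Th}$, which requires the cleaner-than-Taylor inequality $\ln(1+a)\geq a\ln 2$ on $[0,1]$; the Taylor bound $\ln(1+a)\geq a-a^2/2$ would also work but would introduce stray $a^2/2^{2n}$ correction terms that obscure the clean constant $1/\ln 2$ appearing inside the logarithm. With that lemma in hand, the rest is routine algebra, and the $\Theta(n)$ conclusion follows immediately because $\log_2(n/\epsilon_2)$ and $\log_2(n/(\epsilon_2\ln 2))$ are both $o(n)$ while the dominant term is $n/\log_2(1/p)$.
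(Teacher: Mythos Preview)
Your proposal is correct and follows essentially the same route as the paper. The paper packages the sandwich as the single chain $1+ab\ln 2<2^{ab}<(1+a)^b<1+ab$ (with $a=\epsilon_2/2^n$, $b=1/n$), but the ingredients are identical to yours: the upper bound is Bernoulli, and the lower bound amounts to $2^a\le 1+a$ on $[0,1]$ (equivalently your $\ln(1+a)\ge a\ln 2$) combined with $e^x\ge 1+x$.
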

\begin{proof}
    From Lemma \ref{N_th}, $N_\mathsf{Th} =  \frac{\log_2\left(\sqrt[n]{1+\frac{\epsilon_2}{2^{n}}}-1\right)}{\log_2(p)}$. For $0<a<1, 0<b<1$, it can be verifed that $1+ab\ln(2)<2^{ab}<(1 + a)^b<1+ab$. Therefore, we have that 
    \begin{align*}
        \frac{\log_2\left(\frac{\epsilon_2\ln(2)}{n2^n}\right)}{\log_2(p)} > N_\mathsf{Th}&>\frac{\log_2\left(\frac{\epsilon_2}{n2^n}\right)}{\log_2(p)}\\
        \iff
        \frac{n+\log_2\left(\frac{n}{\epsilon_2\ln(2)}\right)}{\log_2\left(\frac{1}{p}\right)}>N_\mathsf{Th}&>\frac{n+\log_2\left(\frac{n}{\epsilon_2}\right)}{\log_2\left(\frac{1}{p}\right)}.
    \end{align*}
\end{proof}
\begin{figure}[H]
    \centering
\includegraphics[width=0.5\textwidth]{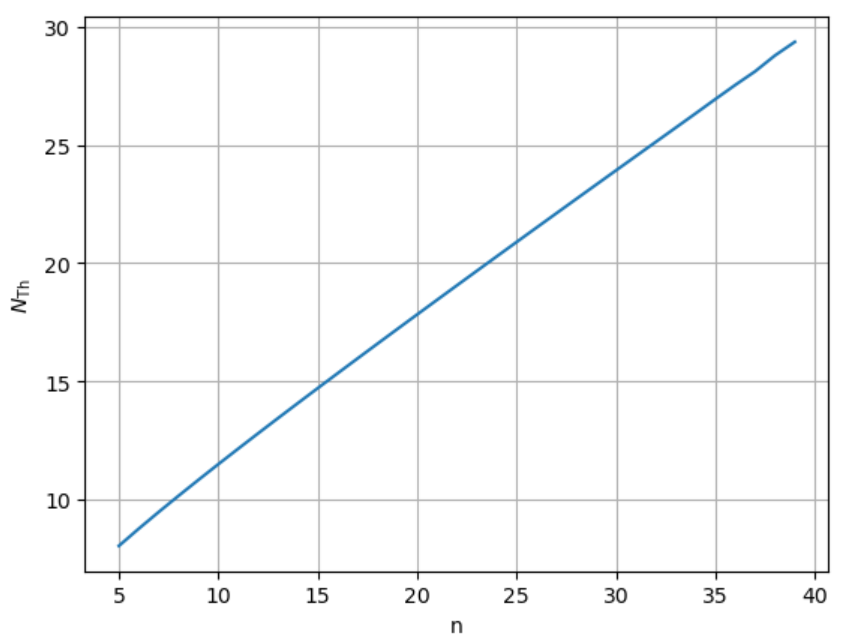}
    \label{fig:N_n}
    \caption{Plot of $N_{\mathsf{Th}}$ versus $n$ for $p = 0.3, \epsilon_2 = 0.01$.}
\end{figure}

Thus, we define the region $\cR$ as $\cR\triangleq\{(\beta,N):\beta\geq\beta_\mathsf{Th}, N\geq N_{\mathsf{Th}}\}$. In the next theorem, we give a sufficient condition for the existence of a unique $N$-permutation. 
\begin{restatable}{theorem}{TheOfUniquePermutation}\label{thm:prob1}
For $(\beta,N)\in\cR$, it is possible to identify the true permutation with probability at least $1-\epsilon$, when $\epsilon_1, \epsilon_2 < \frac{\epsilon}{2}$. 
\end{restatable}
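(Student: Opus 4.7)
The plan is to combine the two earlier uniqueness results (Lemma~\ref{b_th} for partitions and Lemma~\ref{N_th} for labellings) via a union bound. The key observation, already used to motivate the two-step decomposition in Section~\ref{sec:UniqPerm}, is that for $\cS = \mathsf{BEC}(p)$ the true $N$-permutation $\pi$ can be recovered uniquely from $(R'_N, C)$ if and only if both (i) the multiset $\cY$ admits only one valid partitioning (necessarily $\cP^*$), and (ii) the true partitioning $\cP^*$ admits only one valid labelling (necessarily $\cL^*$).

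First I would make (i) and (ii) precise as the two events
\[
A \;\triangleq\; \{\mathbb{P}_\cY = \{\cP^*\}\}, \qquad B \;\triangleq\; \{\mathbb{L}_{\cP^*,\cY} = \{\cL^*\}\}.
\]
On $A \cap B$ the decoder that first lists all valid partitionings of $\cG'$ into $M$ cliques of size $N$, picks the unique one (which must be $\cP^*$), and then lists all valid labellings of $\cP^*$, picks the unique one (which must be $\cL^*$), outputs exactly the true $N$-permutation $\pi$. Hence identification succeeds whenever $A \cap B$ holds.

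Next I would invoke the lemmas and the hypothesis $(\beta,N) \in \cR$, i.e.\ $\beta \geq \beta_\mathsf{Th}$ and $N \geq N_\mathsf{Th}$. Lemma~\ref{b_th} gives $P(A) \geq 1 - \epsilon_1$, and Lemma~\ref{N_th} gives $P(B) \geq 1 - \epsilon_2$. By the union bound,
\[
P(A \cap B) \;\geq\; 1 - P(A^c) - P(B^c) \;\geq\; 1 - \epsilon_1 - \epsilon_2.
\]
Under the assumption $\epsilon_1, \epsilon_2 < \epsilon/2$, this yields $P(A \cap B) > 1 - \epsilon$, which completes the argument.

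The only subtlety, and what I expect to be the main (minor) obstacle, is verifying that the two lemmas can indeed be combined without a subtle dependency issue: Lemma~\ref{N_th} is stated conditionally on looking at $\cP^*$ (its bound is derived only from the erasure pattern on the addresses, i.e.\ from the graph $\cG''$), while Lemma~\ref{b_th} depends on both the address erasures and the random data $D$. Since a union bound requires no independence, the argument goes through regardless. It suffices to note that both probability bounds are over the same underlying product measure (channel noise $\oplus$ uniform data), so the events $A$ and $B$ live on the same probability space and the union bound applies directly.
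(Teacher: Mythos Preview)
Your proposal is correct and follows the same two-step decomposition as the paper: invoke Lemma~\ref{b_th} for uniqueness of the partitioning and Lemma~\ref{N_th} for uniqueness of the labelling, then combine. The only difference is in how the two bounds are merged: the paper multiplies the probabilities, writing $P(A\cap B)\geq(1-\epsilon/2)^2>1-\epsilon$, whereas you use the union bound $P(A\cap B)\geq 1-\epsilon_1-\epsilon_2>1-\epsilon$. Your choice is in fact the more careful one, since the events $A$ and $B$ both depend on the address erasure pattern and are not obviously independent; the union bound sidesteps this issue entirely, exactly as you note in your final paragraph.
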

\begin{proof}
    From Lemma~\ref{b_th} and~\ref{N_th}, it follows that for $\beta>\beta_\mathsf{Th}$ and $N>N_{\mathsf{Th}}$, $\mathbb{P}_\cY = \{\cP^*\}$ with probability $(1-\epsilon_1)$ and $\mathbb{L}_{\cP^*,\cY} = \{\cL^*\}$ with probability $(1-\epsilon_2)$, respectively. Hence, for $\beta>\beta_\mathsf{Th}$ and $N>N_{\mathsf{Th}}$, there exists only one valid permutation with probability $(1-\frac{\epsilon}{2})^2>(1-\epsilon)$. 
\end{proof}
\vspace{0.2cm}
From Corollaries \ref{b_th_n} and \ref{N_Th_n}, we observe that $\beta_{\mathsf{Th}}<\beta^*$ and $N_{\mathsf{Th}}<\nu^* n$ for some constants $\beta^*$ and $\nu^*$. 
This means that we only require data parts to be of length $L=\beta^* n$ and the number of reads to be $N=\nu^* n$ so that correct identification occurs with high probability.
In the next section, we design an algorithm to find the true permutation with a small number of data comparisons.

\section{Data-driven Pruning Algorithm}
\label{Sec:DPA}
As the receiver has access to the set of addresses, we design an algorithm that reduces the number of data comparisons by comparing a pair of reads if and only if they agree at the positions that are not erased in the address part. Hence, similar to the peeling matching algorithm, we first build the bipartite graph $\cG = (\cX\cup\cY, E)$ as described in Section~\ref{Sec:PMA}. Let $\cN_{(\bfy,\bfd')}$ denote the two-hop neighborhood of $(\bfy,\bfd')$ in $\cG$. Note that for $(\bfy,\bfd'), (\widetilde{\bfy},\widetilde{\bfd'})\in\cY$,  $(\bfy,\bfd')\in\cN_{(\widetilde{\bfy},\widetilde{\bfd'})}$ if and only if $\bfy\not\cong\widetilde{\bfy}$. In the next lemma, we calculate the expected value of $|\cN_{(\bfy,\bfd')}|$.

\begin{figure}[H]
\centering
\resizebox{6.5cm}{!}
{
\begin{tikzpicture}

\filldraw (1.5,2) circle (1pt);
\filldraw (1.5,2.5) circle (1pt);
\filldraw (1.5,3) circle (1pt);
\filldraw (1.5,3.5) circle (1pt);
\filldraw (1.5,4) circle (1pt);
\filldraw (1.5,4.5) circle (1pt);

\draw (2.5,2) node{\textcolor{blue}{$(\bfy_6,\bfd'_6)$}};
\draw (2.5,2.5) node{$(\bfy_5,\bfd'_5)$};
\draw (2.5,3) node{$(\bfy_4,\bfd'_4)$};
\draw (2.5,3.5) node{$(\bfy_3,\bfd'_3)$};
\draw (2.5,4) node{$(\bfy_2,\bfd'_2)$};
\draw (2.5,4.5) node{$(\bfy_1,\bfd'_1)$};

\filldraw (-1.5,2.5) circle (1pt);
\filldraw (-1.5,4) circle (1pt);

\draw [-stealth] (-1.5,2.5) -- (1.5,2);
\draw [-stealth] (-1.5,2.5) -- (1.5,2.5);
\draw [-stealth] (-1.5,2.5) -- (1.5,3);
\draw [-stealth] (-1.5,2.5) -- (1.5,3.5);

\draw [-stealth] (-1.5,4) -- (1.5,2.5);
\draw [-stealth] (-1.5,4) -- (1.5,3);
\draw [-stealth] (-1.5,4) -- (1.5,3.5);
\draw [-stealth] (-1.5,4) -- (1.5,4);
\draw [-stealth] (-1.5,4) -- (1.5,4.5);

\draw (-2,4) node{$\bfx_1$};
\draw (-2,2.5) node{$\bfx_2$};

\end{tikzpicture}
}
\caption{Let $N=3$. For $(\bfy_6,\bfd'_6)$, we can potentially identify the remaining $2$ copies by  performing only $|\cN_{(\bfy_6,\bfd'_6)}| = 3$ data comparisons.}
\label{fig:DPA}
\end{figure}
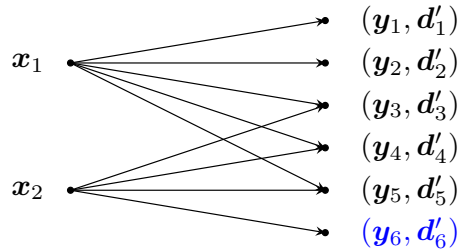

\begin{restatable} {lemma}{SizeOfTwoHop}
\label{2hop}
    For a given $(\bfy,\bfd') \in \cY$,
    \begin{align*}
        \mathbb{E}[|\cN_{(\bfy,\bfd')}|~|~(\bfy,\bfd')] = N2^r(1+p)^{n-r}-1,
    \end{align*}
where $r$ denotes the number of erasures in $\bfy$. Further, $ \mathbb{E}[|\cN_{(\bfy,\bfd')}|] = N(1+2p-p^2)^{n}-1$. 
\end{restatable}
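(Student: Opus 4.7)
The plan is to compute the conditional expectation $\mathbb{E}[|\cN_{(\bfy,\bfd')}|\mid(\bfy,\bfd')]$ by linearity, summing over all $N2^n$ reads the probability that the read lies in $\cN_{(\bfy,\bfd')}$, and then to marginalise over the number of erasures $r$ in $\bfy$ using $r\sim \mathrm{Binomial}(n,p)$. The key observation is that a read $(\widetilde{\bfy},\widetilde{\bfd'})\neq (\bfy,\bfd')$ belongs to $\cN_{(\bfy,\bfd')}$ iff $\bfy$ and $\widetilde{\bfy}$ share some common left neighbour in $\cG$, which by the definition of the $\mathsf{BEC}(p)$ edges of $\cG$ is equivalent to $\widetilde{\bfy}\cong \bfy$, i.e., agreement at every coordinate that is non-erased in both.

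First I would fix $(\bfy,\bfd')$ with erasure set $E\subseteq[n]$ of size $r$. For a read whose source address is $\tx\in\{0,1\}^n$, let $a(\tx)=|\{i\in[n]\setminus E:\tx_i=\bfy_i\}|$ be the number of positions at which $\tx$ agrees with $\bfy$ among those not erased in $\bfy$. At each of the $n-r-a(\tx)$ non-erased coordinates where $\tx$ disagrees with $\bfy$, the channel output $\widetilde{\bfy}$ must be erased to avoid a forbidden mismatch; the remaining coordinates impose no constraint. Hence the probability that a single read originating from $\tx$ is compatible with $\bfy$ is exactly $p^{n-r-a(\tx)}$.

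Next, since $C=\{0,1\}^n$ and each address contributes $N$ independent reads, I would sum over all $\tx\in\{0,1\}^n$, grouping by the value $a$. There are $\binom{n-r}{a}2^r$ addresses with agreement count $a$ (choose the $a$ matching positions among the $n-r$ non-erased coordinates, then pick anything at the $r$ erased coordinates), so by the binomial theorem
$$\mathbb{E}[|\cN_{(\bfy,\bfd')}|\mid (\bfy,\bfd')] \;=\; N\sum_{a=0}^{n-r}\binom{n-r}{a}2^r p^{\,n-r-a} \;-\; 1 \;=\; N\cdot 2^r(1+p)^{n-r}-1,$$
where the $-1$ removes the contribution of $(\bfy,\bfd')$ itself, which is counted with probability one among the $N$ compatible reads coming from its own source address but is explicitly excluded from the two-hop neighbourhood.

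Finally, taking expectation over $r\sim\mathrm{Binomial}(n,p)$ yields
$$\mathbb{E}[|\cN_{(\bfy,\bfd')}|] \;=\; N\sum_{r=0}^{n}\binom{n}{r}(2p)^r\bigl((1-p)(1+p)\bigr)^{n-r}-1 \;=\; N(1+2p-p^2)^n-1,$$
using $(1-p)(1+p)=1-p^2$ and the binomial theorem once more. The computation is otherwise routine; the only conceptual step is to recognise that two-hop adjacency in $\cG$ is exactly the compatibility relation $\widetilde{\bfy}\cong\bfy$, and the only arithmetic subtlety is to subtract the self-contribution so that the neighbourhood genuinely excludes the centre node.
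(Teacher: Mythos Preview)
Your proposal is correct and follows essentially the same approach as the paper: both arguments identify two-hop adjacency with the compatibility relation $\widetilde{\bfy}\cong\bfy$, group source addresses by how they match $\bfy$ on the $n-r$ non-erased coordinates, compute the per-read inclusion probability as $p$ raised to the number of mismatches, and close with the binomial theorem and the law of total expectation. The only cosmetic difference is that the paper indexes by the disagreement count $i=n-r-a$ whereas you index by the agreement count $a$; the resulting sums are identical after the change of variable.
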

\begin{proof}
    Let the number of erasures in $\bfy$ be $r$. Let $(\bfy,\bfd')\in\cS_N((\bfx,\bfd))$. For $(\tx,\tilde{\bfd})\in\cX$ that does not differ from $(\bfx,\bfd)$ at the non-erased positions in $(\bfy,\bfd')$, $\cS_N(((\tx,\tilde{\bfd})))\subset\cN_{(\bfy,\bfd')}$. Note that there are $2^r-1$ such $(\tx,\tilde{\bfd})$. 
    For $(\tx,\tilde{\bfd})\in\cX$ that differ from $(\bfx,\bfd)$ at $i$ out of the $n-r$ non-erased positions in $(\bfy,\bfd')$, we have that for $(\widetilde{\bfy},\tilde{\bfd'}) \in \cS_N((\tx,\tilde{\bfd}))$, $P((\widetilde{\bfy},\tilde{\bfd'})\in\cN_{(\bfy,\bfd')}) = p^i$. Note that there are $2^r\binom{n-r}{i}$ such $(\tx,\widetilde{\bfd})$. Hence, from linearity of expectations, 
    $\mathbb{E}[|\cN_{(\bfy,\bfd')}|~|~(\bfy,\bfd')] = (N(2^r-1)+ N-1) +N2^r\sum_{i=1}^{n-r}\binom{n-r}{i}p^i = N2^r(1+p)^{n-r}-1$. Using  law of total expectation, \begin{align*}
        \mathbb{E}[|\cN_{(\bfy,\bfd')}|] &= \sum_{r=0}^n \binom{n}{r}p^r(1-p)^{n-r}\left(N2^r(1+p)^{n-r}-1\right)\\
        &= N(1+2p-p^2)^{n}-1. 
    \end{align*} 
\end{proof}

The data-driven pruning algorithm as described below, iteratively selects the right node $(\bfy,\bfd')$ with the smallest two-hop neighborhood in $\cY$ and then as shown in Fig.~\ref{fig:DPA}, performs $|\cN_{(\bfy,\bfd')}|$ data comparisons to identify the remaining $N-1$ copies. Note that this pruning procedure finds the remaining $N-1$ copies if and only if  $(\bfy,\bfd')\not\in\mathsf{Y_{faulty}}$. Let $\cP_\cG = \left(\cX \cup \cY, \cP_E\right)$ denote the bipartite matching identified by the data-driven pruning algorithm.  
\begin{algorithm}
\caption{Data-driven Pruning Algorithm}
\label{Alg2}
\begin{algorithmic}[1]
\Procedure{Prune}{$\cG,(\widetilde{\bfy},\widetilde{\bfd'})$}
\State $(\widetilde{\bfy},\widetilde{\bfd'})\longrightarrow\mathsf{Pruned}$, $\cT = \{\}$
\For{$(\bfy,\bfd')\in  \cN_{(\widetilde{\bfy},\widetilde{\bfd'})}$}
    \If{$(\bfy,\bfd')\cong (\widetilde{\bfy},\widetilde{\bfd'})$}
    \State $(\bfy,\bfd') \longrightarrow \cT $
    \EndIf
\EndFor

\If{$|\cT| = N-1$}
\State Let $\cX^* = \bigcap_{(\bfy,\bfd')\in\cT}E_{(\bfy,\bfd')}$
\For{$(\bfy,\bfd')\in \cT$}    
    \State Remove $\{(\bfx, (\bfy,\bfd')):\bfx\notin \cX^*\}$ from $E$
    \State $(\bfy,\bfd')\longrightarrow\mathsf{Pruned}$
\EndFor
\EndIf
\EndProcedure
\vspace{0.2cm}

\Procedure{Pruning Algorithm}{$\cP_\cG, \cG$}
\State$\mathsf{Pruned}=\{\}$
\While{$|\mathsf{Pruned}|<N2^n$}
\State $(\widetilde{\bfy},\widetilde{\bfd'}) = \argmin\{|\cN_{(\bfy,\bfd')}|:(\bfy,\bfd')\in \cY\}$
\State PRUNE $(\cG,(\widetilde{\bfy},\widetilde{\bfd'}))$
\EndWhile
\State \Return $\mathsf{PMA}(\cP_\cG,\cG)$
\EndProcedure
\end{algorithmic}
\end{algorithm}

\begin{restatable}{proposition}{ProlgFindsPermutation}
\label{prop2}
For $(\beta,N)\in\cR$, Algorithm~\ref{Alg2} finds the true permutation with probability at least $1-\epsilon$, when $\epsilon_1, \epsilon_2 < \frac{\epsilon}{2}$. 
\end{restatable}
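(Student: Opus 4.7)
The plan is to reduce the correctness of Algorithm~\ref{Alg2} to the two uniqueness guarantees of Theorem~\ref{thm:prob1}, so that the algorithm's failure probability is bounded by the failure probability of those guarantees. Since $(\beta,N)\in\cR$ and $\epsilon_1,\epsilon_2<\epsilon/2$, Lemmas~\ref{b_th} and~\ref{N_th} give that the joint event $\{\mathbb{P}_\cY=\{\cP^*\}\}\cap\{\mathbb{L}_{\cP^*,\cY}=\{\cL^*\}\}$ holds with probability at least $(1-\epsilon_1)(1-\epsilon_2)>1-\epsilon$. It then suffices to show that, conditional on this joint event, Algorithm~\ref{Alg2} deterministically returns the true permutation.

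For the pruning loop, I would invoke the sufficient condition used in the proof of Lemma~\ref{b_th}: $\mathbb{P}_\cY=\{\cP^*\}$ follows from $|\mathsf{X_{faulty}}|<1$, i.e., every strand admits at least one non-faulty read whose $\cong$-equivalence class inside $\cY$ is exactly its $N-1$ true siblings. Since the outer loop of Algorithm~\ref{Alg2} runs until every read enters $\mathsf{Pruned}$, every strand is eventually represented by a non-faulty pick $(\widetilde{\bfy},\widetilde{\bfd'})$; for such a pick $|\cT|=N-1$ exactly, $\cT$ coincides with the true sibling set, and the edge removal restricts each sibling's left neighborhood to $\cX^*=\bigcap_{(\bfy,\bfd')\in\cT}E_{(\bfy,\bfd')}$, which always contains the true address and so never discards a true edge. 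Faulty reads that are picked earlier simply enter $\mathsf{Pruned}$ without altering $E$, so they cannot corrupt the subsequent non-faulty pick from the same strand (the computation of $\cN$ and $\cT$ depends only on the channel outputs and the current edge set, not on pruned-flags).

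For the terminal $\mathsf{PMA}$ call, I would invoke the sufficient condition used in the proof of Lemma~\ref{N_th}: $\mathbb{L}_{\cP^*,\cY}=\{\cL^*\}$ follows from $\cG''$ having no outgoing edges, i.e., no spurious address $\tx\neq\bfx_c$ is consistent with all $N$ siblings of strand $\bfx_c$. Combined with the pruning step, this forces the pruned bipartite $\cG$ to be acyclic: any undirected cycle in it alternates between reads and addresses, and since the surviving left neighbors of a read from strand $\bfx_c$ lie in $\{\bfx_c\}\cup\{\tx:\bfx_c\to\tx\text{ in }\cG''\}$, contracting each strand's $N$ reads to its true address and orienting along the outgoing $\cG''$-edges would produce a nontrivial directed cycle in $\cG''$, contradicting acyclicity. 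By Proposition~\ref{cycle}, $\mathsf{PMA}$ then succeeds on the pruned graph and outputs the unique valid matching, which must be $\cL^*$, and so Algorithm~\ref{Alg2} returns the true permutation.

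The main obstacle is the cycle-translation argument in the last step: one must check carefully that every edge surviving pruning really corresponds to either a true edge or a $\cG''$-edge $\bfx_c\to\tx$ out of the read's true strand, paying attention to the slightly asymmetric treatment of the picked read $(\widetilde{\bfy},\widetilde{\bfd'})$, whose own left neighborhood is not explicitly restricted by its own pruning step but is restricted implicitly whenever it later appears inside $\cT$ of a non-faulty pick from another strand. Once the cycle translation is in place, combining deterministic success on the joint event with the probability bound from Theorem~\ref{thm:prob1} yields the claim.
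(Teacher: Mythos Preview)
Your proposal follows essentially the same approach as the paper: reduce correctness of Algorithm~\ref{Alg2} to the two events controlled by Lemmas~\ref{b_th} and~\ref{N_th} (every left node has a non-faulty read, and the labelling given $\cP^*$ is unique), and combine their probabilities via $(1-\epsilon_1)(1-\epsilon_2)>1-\epsilon$. The paper's own proof is a three-sentence sketch that simply asserts ``the data-driven pruning algorithm identifies the true partitioning'' from the non-faulty-read condition and then invokes unique labelling; you go further by unpacking why the pruning loop and the terminal $\mathsf{PMA}$ actually succeed, and you correctly flag the cycle-translation step (and the asymmetric handling of the picked read) as the only nontrivial piece, which the paper leaves implicit.
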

\begin{proof}
    For $\beta>\beta_{\mathsf{Th}}$, every left node has at least one non-faulty channel output with probability at least $(1-\epsilon_1)$. Thus, the data-driven pruning algorithm identifies the true partitioning with probability at least $(1-\epsilon_1)$. For $N>N_{\mathsf{Th}}$, there exists only one valid labelling, viz. the true labelling with probability at least $(1-\epsilon_2)$. Thus, the data-driven pruning algorithm identifies the true permutation with probability at least $(1-\frac{\epsilon}{2})^2>(1-\epsilon)$. 
\end{proof}
\section{Analysis of Data-driven Pruning Algorithm}
\label{Sec:DPA_analysis}
In this section, we analyse the expected number of data comparisons performed by Algorithm \ref{Alg2} for three subregions of $\cR$. 
In the next lemma, we give an upper bound on the expected number of data comparisons performed by Algorithm \ref{Alg2} when $(\beta,N)\in\cR$. 
\begin{restatable}{lemma}{ExepectedComForAlg}\label{ExpectedComparisons}
    The expected number of data comparisons performed by Algorithm \ref{Alg2} when $(\beta,N)\in\cR$ is at most 
    \begin{align*}
      \cU_0 \triangleq N^22^n\left(1+2p-p^2\right)^n.
    \end{align*}
\end{restatable}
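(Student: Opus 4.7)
The plan is to bound the total number of data comparisons by a sum of the sizes of two-hop neighborhoods, and then apply linearity of expectation together with the exact formula $\mathbb{E}[|\cN_{(\bfy,\bfd')}|] = N(1+2p-p^2)^n-1$ provided by Lemma~\ref{2hop}. First I would observe that in every iteration of the outer \textsc{while}-loop, the chosen node $(\widetilde{\bfy},\widetilde{\bfd'})$ is immediately added to $\mathsf{Pruned}$ (whether or not the pruning step succeeds in finding the remaining $N-1$ copies). Consequently, each right node of $\cY$ is selected at most once, so the algorithm performs at most $N2^n$ outer iterations and the set of selected nodes is a submultiset of $\cY$.

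Next I would note that the $\textsc{Prune}$ subroutine, when invoked on $(\widetilde{\bfy},\widetilde{\bfd'})$, loops over the current two-hop neighborhood of $(\widetilde{\bfy},\widetilde{\bfd'})$ and performs one data comparison per element. Since pruning only removes edges and right nodes from $\cG$, the current two-hop neighborhood is contained in the two-hop neighborhood $\cN_{(\widetilde{\bfy},\widetilde{\bfd'})}$ computed in the original graph; hence the number of comparisons in that iteration is at most $|\cN_{(\widetilde{\bfy},\widetilde{\bfd'})}|$. Summing over all iterations and dropping the ``selected'' restriction (since the bound is valid for any submultiset of $\cY$), the total number of comparisons is at most
\begin{equation*}
\sum_{(\bfy,\bfd')\in \cY} |\cN_{(\bfy,\bfd')}|.
\end{equation*}

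Finally, I would take expectations and apply linearity together with Lemma~\ref{2hop} to obtain
\begin{equation*}
\mathbb{E}\!\left[\sum_{(\bfy,\bfd')\in \cY} |\cN_{(\bfy,\bfd')}|\right] = N2^n\cdot\bigl(N(1+2p-p^2)^n-1\bigr) \leq N^2 2^n(1+2p-p^2)^n = \cU_0,
\end{equation*}
which is exactly the claimed bound. The only subtle point, and the step most likely to invite scrutiny, is the monotonicity argument that bounds the current two-hop neighborhood by the original one; this is what lets us replace a potentially complicated conditional expectation along the execution trace with the clean unconditional expectation from Lemma~\ref{2hop}. The ``$\argmin$'' selection rule in the algorithm is not actually needed to obtain this particular upper bound, but it will be essential when one later tries to tighten the bound into the refined expressions announced in the abstract.
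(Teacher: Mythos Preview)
Your proposal is correct and follows essentially the same approach as the paper: bound the total number of data comparisons by $\sum_{(\bfy,\bfd')\in\cY}|\cN_{(\bfy,\bfd')}|$, apply linearity of expectation, and invoke Lemma~\ref{2hop}. In fact you have supplied more justification than the paper does, making explicit the ``each node is selected at most once'' and monotonicity-of-neighborhoods arguments that the paper leaves implicit.
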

\begin{proof}
  Note that the number of data comparisons performed by Algorithm \ref{Alg2} is at most $\sum_{(\bfy,\bfd')\in\cY}|\cN_{(\bfy,\bfd')}|$. From linearity of expectations, $\mathbb{E}\left[\sum_{(\bfy,\bfd')\in\cY}|\cN_{(\bfy,\bfd')}|\right] = \underset{(\bfy,\bfd')\in\cY}{\sum}\mathbb{E}\left[|\cN_{(\bfy,\bfd')}|\right]$. From Lemma \ref{2hop}, the result follows. 
\end{proof}

Let $\beta_0$ be a threshold on $\beta$ such that for $\beta\geq\beta_0$, $P(|\mathsf{Y_{faulty}}|>1) < \epsilon_1$. In the next lemma, we derive this threshold $\beta_0$.

\begin{restatable}{lemma}{BetaLargerThanBetaZero}
    
For $\beta\geq\beta_0\triangleq \frac{\log_2\left(\frac{\epsilon_1}{2^{n}N^2((1+2p-p^2)^n-1)}\right)}{n\log_2\left(1-\frac{1}{2}(1-p)^2\right)}$, $P(|\mathsf{Y_{faulty}}|>1) < \epsilon_1$. 
\end{restatable}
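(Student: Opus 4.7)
The proof plan mirrors the argument of Lemma~\ref{b_th} but operates directly at the right-node level rather than requiring \emph{all} $N$ reads of a left node to be faulty. The strategy is a standard first-moment / Markov inequality calculation: bound $P\bigl((\bfy,\bfd')\in\mathsf{Y_{faulty}}\bigr)$ uniformly, lift to $\mathbb{E}[|\mathsf{Y_{faulty}}|]$ by linearity of expectation, apply Markov's inequality, and solve for the threshold on $\beta$.

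First, I would invoke Lemma~\ref{right_faulty}, which gives
\[
P\bigl((\bfy,\bfd')\in\mathsf{Y_{faulty}}\bigr) = 1 - \prod_{r=1}^{n} \left(1-(2p-p^2)^{r}\left(1-\tfrac{1}{2}(1-p)^2\right)^L\right)^{N\binom{n}{r}}.
\]
Applying the Weierstrass inequality $1-\prod_r(1-a_r)^{m_r} \leq \sum_r m_r a_r$ with $a_r = (2p-p^2)^r(1-\tfrac{1}{2}(1-p)^2)^L$ and $m_r = N\binom{n}{r}$, and then using the binomial identity $\sum_{r=1}^n \binom{n}{r}(2p-p^2)^r = (1+2p-p^2)^n - 1$, yields
\[
P\bigl((\bfy,\bfd')\in\mathsf{Y_{faulty}}\bigr) \leq N\left(1-\tfrac{1}{2}(1-p)^2\right)^{L}\bigl((1+2p-p^2)^n - 1\bigr).
\]

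Second, since $|\cY| = N2^n$, linearity of expectation gives
\[
\mathbb{E}[|\mathsf{Y_{faulty}}|] \leq N^2\, 2^n \left(1-\tfrac{1}{2}(1-p)^2\right)^{L}\bigl((1+2p-p^2)^n - 1\bigr).
\]
Markov's inequality then yields $P(|\mathsf{Y_{faulty}}| \geq 1) \leq \mathbb{E}[|\mathsf{Y_{faulty}}|]$, and since faulty right nodes necessarily come in matched pairs we have $P(|\mathsf{Y_{faulty}}|>1) = P(|\mathsf{Y_{faulty}}|\geq 1)$.

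Finally, substituting $L=\beta n$ and requiring the right-hand side to be less than $\epsilon_1$, we need
\[
\left(1-\tfrac{1}{2}(1-p)^2\right)^{\beta n} < \frac{\epsilon_1}{N^2 2^n\bigl((1+2p-p^2)^n - 1\bigr)}.
\]
Taking $\log_2$ of both sides and dividing by $n\log_2(1-\tfrac{1}{2}(1-p)^2)$, a quantity that is \emph{negative} (which flips the inequality), isolates $\beta$ and recovers exactly the claimed threshold $\beta_0$. The only subtlety to be careful about is this sign flip; beyond that, the argument is purely mechanical and requires no new probabilistic ingredient beyond what was already used in Lemma~\ref{b_th}.
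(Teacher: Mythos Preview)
Your proposal is correct and follows essentially the same route as the paper's own proof: invoke Lemma~\ref{right_faulty}, apply the Weierstrass product inequality to linearize the product, sum over $|\cY|=N2^n$ right nodes, apply Markov's inequality, and solve for $\beta$. The only cosmetic difference is ordering (you bound the per-node probability first and then take expectation, while the paper writes down $\mathbb{E}[|\mathsf{Y_{faulty}}|]$ first and then applies Weierstrass), and your remark that faulty right nodes occur in pairs cleanly justifies $P(|\mathsf{Y_{faulty}}|>1)=P(|\mathsf{Y_{faulty}}|\geq 1)$, a step the paper leaves implicit.
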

\begin{proof}
If $\mathbb{E}\left[|\mathsf{Y_{faulty}}|\right]< {\epsilon_1}$ then from Markov inequality the result follows. From Lemma \ref{right_faulty}, we have that $$
\mathbb{E}\left[|\mathsf{Y_{faulty}}|\right]=N2^n\left(1-\prod_{r=1}^{n} \left(1-(2p-p^2)^{r}\left(1-\frac{1}{2}(1-p)^2\right)^L\right)^{N\binom{n}{r}}\right).
$$
Hence, to show that $\mathbb{E}\left[|\mathsf{Y_{faulty}}|\right]< {\epsilon_1}$ it is sufficient to show that $$1-\frac{\epsilon_1}{2^{n}N}<\prod_{r=1}^{n} \left(1-(2p-p^2)^{r}\left(1-\frac{1}{2}(1-p)^2\right)^L\right)^{N\binom{n}{r}}. $$
Using Weierstrass inequality we have that \begin{align*}
&\prod_{r=1}^{n} \left(1-(2p-p^2)^{r}\left(1-\frac{1}{2}(1-p)^2\right)^L\right)^{N\binom{n}{r}}\\
&\geq 1-\sum_{r=1}^n N\binom{n}{r} (2p-p^2)^{r}\left(1-\frac{1}{2}(1-p)^2\right)^L\\
& = 1-N\left(\left(1-\frac{1}{2}(1-p)^2\right)^L \left((1+2p-p^2)^n-1\right)\right),
\end{align*}
and thus it is enough to show that
$$N\left(\left(1-\frac{1}{2}(1-p)^2\right)^L \left((1+2p-p^2)^n-1\right)\right)<\frac{\epsilon_1}{2^{n}N}.$$
Lastly, it can be verified that $\mathbb{E}\left[|\mathsf{Y_{faulty}}|\right]< {\epsilon_1}$ if $ \log_{(1-\frac{1}{2}(1-p)^2)}\left(\frac{\epsilon_1}{2^{n}N^2((1+2p-p^2)^n-1)}\right)<L=\beta n$. 

\end{proof}

We define $\cR'\subseteq\cR$ as $\cR'\triangleq\{(\beta,N):\beta\geq\beta_0, N\geq N_\mathsf{Th}\}$. To analyse the expected number of data comparisons performed by Algorithm~\ref{Alg2} when $(\beta,N)\in\cR'$, we define the notion of order of a left node.  

\begin{definition}
A node $\bfx\in\cX$ has order $s$ if $\min\{|E_{(\bfy,\bfd')}|: (\bfy,\bfd') \in \cS_N(\bfx,\bfd) \} = s$. 
\end{definition}

For $s\in[2^n]$, let $\Xlr$ denote the set of left nodes with order $s$. In the next lemma, we calculate the probability that a left node has order $s$. 
\begin{restatable}{lemma}{OrderOfLeftNodes}
    
For $\bfx\in \cX$, before the initiation of Algorithm~\ref{Alg2}, $ P(\bfx\in \Xlr)$ is
$$
\begin{cases}
       \left(\sum_{i=\ell}^{n}\binom{n}{i}p^i(1-p)^{n-i}\right)^N - \left(\sum_{i=\ell+1}^{n}\binom{n}{i}p^i(1-p)^{n-i}\right)^N & s \in \{2^\ell, \ell\in[0:n]\} \\
        0 & \text{otherwise}.
\end{cases}
$$
\end{restatable}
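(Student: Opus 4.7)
The plan is to reduce this to a statement about the minimum number of erasures across the $N$ copies of $\bfx$, after first observing why only powers of $2$ can occur as orders.

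First, I would analyze $|E_{(\bfy,\bfd')}|$ for a single right node. Since the code $C$ is the whole space $\{0,1\}^n$ and the channel is $\mathsf{BEC}(p)$, for any $(\bfy,\bfd') \in \cY$ with exactly $\ell$ erasures in the address portion $\bfy$, an address $\tx \in C$ satisfies $P(\bfy|\tx) > 0$ if and only if $\tx$ agrees with $\bfy$ on the $n-\ell$ unerased positions; there are exactly $2^\ell$ such addresses, so $|E_{(\bfy,\bfd')}| = 2^\ell$. This immediately yields $P(\bfx \in \Xlr) = 0$ whenever $s$ is not of the form $2^\ell$ for some $\ell \in [0:n]$, handling the second case of the statement.

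Next, I would rewrite the order of a left node in terms of erasure counts. Let the $N$ copies in $\cS_N((\bfx,\bfd))$ be indexed by $j \in [N]$ and let $E_j$ denote the number of erasures in the address part of the $j$-th copy. By the previous step, $|E_{(\bfy_j,\bfd'_j)}| = 2^{E_j}$, so the order of $\bfx$ equals $2^{\min_{j\in[N]} E_j}$. Therefore, for $s = 2^\ell$,
\begin{equation*}
P(\bfx \in \Xlr) \;=\; P\!\left(\min_{j\in[N]} E_j = \ell\right).
\end{equation*}

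Finally, I would compute this probability. Since the erasure patterns across the $N$ independent transmissions of $(\bfx,\bfd)$ are mutually independent and each $E_j \sim \text{Bin}(n,p)$, the events $\{E_j \geq \ell\}$ are independent across $j$, giving
\begin{equation*}
P\!\left(\min_{j\in[N]} E_j \geq \ell\right) \;=\; \bigl(P(E_1 \geq \ell)\bigr)^N \;=\; \left(\sum_{i=\ell}^{n}\binom{n}{i}p^i(1-p)^{n-i}\right)^{\!N}.
\end{equation*}
The desired expression then follows from the telescoping identity $P(\min = \ell) = P(\min \geq \ell) - P(\min \geq \ell+1)$. No step appears to be a real obstacle here; the only subtlety is justifying the equality $|E_{(\bfy,\bfd')}| = 2^{E_j}$, which uses both the erasure channel model and the fact that $C$ is the whole space, and justifying that the statement describes the graph \emph{before} any neighbors have been pruned (so that the edge set $E$ truly contains all compatible pairs).
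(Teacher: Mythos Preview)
Your proof is correct and follows essentially the same approach as the paper: both observe that before any pruning the degree of a right node is exactly $2^\ell$ (with $\ell$ the number of address erasures), reduce the order of $\bfx$ to the minimum erasure count over the $N$ independent copies, and compute $P(\min=\ell)$ via $P(\min\ge\ell)-P(\min\ge\ell+1)$ using independence. Your write-up is in fact more explicit than the paper's, which leaves the ``it can be verified'' step unelaborated.
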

\begin{proof}
    Note that before any edges are removed from $\cG$, the degree of a right node can take values only from the set $\{2^\ell, \ell\in[0:n]\}$. Therefore, for $s \notin \{2^\ell, \ell\in[0:n]\}, P(x\in \Xlr) = 0$. For $s \in \{2^\ell, \ell\in[0:n]\}$, it can be verified that $P(\bfx\in \Xlr) = P (\bigcup_{(\bfy,\bfd') \in \cS_N(\bfx,\bfd)} |E_{(\bfy,\bfd')}|\geq s) - P (\bigcup_{(\bfy,\bfd') \in \cS_N(\bfx,\bfd)} |E_{(\bfy,\bfd')}|\geq s+1)$. Therefore, the result follows. 
\end{proof}

In the next lemma, we derive an upper bound on the expected number of data comparisons performed by Algorithm~\ref{Alg2} when $(\beta, N)\in\cR'$.
\begin{restatable}{lemma}{TotalExpLemma}\label{ExpectedComparisons2}
    For $(\beta, N)\in\cR'$, the expected number of data comparisons performed by Algorithm~\ref{Alg2} is at most
\begin{align*}
       \cU_1 \triangleq \sum_{r=0}^n \mathbb{E}[|\mathsf{X}_{2^r}|]N2^r((1+p)^{n-r}).
\end{align*}
 \end{restatable}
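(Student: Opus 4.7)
The plan is to work under the event $A$ that no right node is faulty, which has probability at least $1-\epsilon_1$ throughout $\cR'$ by the preceding lemma for $\beta\geq\beta_0$. Under $A$, every call to \textsc{Prune} on a right node $(\widetilde{\bfy},\widetilde{\bfd'})\in\cS_N((\bfx,\bfd))$ produces $|\cT|=N-1$ (the matching partners are exactly the other $N-1$ copies of $\bfx$), so the \textbf{if} branch fires, a full block of $N$ reads is added to $\mathsf{Pruned}$, and Algorithm~\ref{Alg2} terminates in exactly $2^n$ outer iterations---one per left node.

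To control the comparisons charged to each left node, fix $\bfx$ of order $2^r$ and let $(\bfy^{\ast},\bfd'^{\ast})\in\cS_N((\bfx,\bfd))$ be a copy attaining this minimum degree, so $\bfy^{\ast}$ has exactly $r$ erasures in its address part. Two facts combine: under $A$, all $N$ copies of $\bfx$ remain in $\cY$ up to the iteration at which $\bfx$ is processed (no earlier prune could have removed them), and the greedy rule picks the right node minimizing $|\cN_{(\cdot)}|$ over the current $\cY$. Since edges are only deleted from $\cG$ throughout the run, $|\cN_{(\bfy^{\ast},\bfd'^{\ast})}|$ is monotonically non-increasing, so the comparisons performed when $\bfx$ is processed are at most the value of $|\cN_{(\bfy^{\ast},\bfd'^{\ast})}|$ on the initial graph.

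Applying Lemma~\ref{2hop} conditional on $\bfx\in\mathsf{X}_{2^r}$ fixes the erasure count of $\bfy^{\ast}$ to $r$ and yields $\mathbb{E}[|\cN_{(\bfy^{\ast},\bfd'^{\ast})}|\mid \bfx\in\mathsf{X}_{2^r}]=N 2^r(1+p)^{n-r}-1$. Grouping left nodes by order and using linearity of expectation,
\begin{align*}
\mathbb{E}[\text{comparisons}\mid A] \leq \sum_{r=0}^{n}\mathbb{E}[|\mathsf{X}_{2^r}|]\bigl(N 2^r(1+p)^{n-r}-1\bigr) < \cU_1.
\end{align*}
The contribution from $A^{c}$ is at most $\epsilon_{1}\cdot\cU_{0}$ via the crude pathwise bound of Lemma~\ref{ExpectedComparisons}, which is dominated within $\cR'$ by the choice of $\epsilon_1$.

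The main obstacle I anticipate is cleanly setting up the per-left-node bound: one must verify both that the minimum-degree representative $(\bfy^{\ast},\bfd'^{\ast})$ is still present in $\cY$ at the moment $\bfx$ is processed, and that its two-hop neighborhood size on the current graph never exceeds its value on the initial graph, despite the interleaved edge deletions triggered by earlier successful prunes. Once this reduction is in place, extracting $N 2^r(1+p)^{n-r}$ from Lemma~\ref{2hop} and summing over orders is routine.
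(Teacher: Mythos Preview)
Your proposal is correct and follows essentially the same route as the paper: both condition on the no-faulty-right-node event, bound the comparisons charged to each left node by the two-hop neighborhood of its minimum-degree copy on the initial graph, invoke Lemma~\ref{2hop} conditionally on the order, and sum over $r$. You actually supply more justification than the paper does---the monotonicity of $|\cN_{(\bfy^{\ast},\bfd'^{\ast})}|$ under edge deletions, the persistence of the minimum-degree representative until $\bfx$ is processed, and the explicit handling of the $A^{c}$ contribution---all of which the paper's proof leaves implicit.
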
   
 \begin{proof}
 Note that for $\beta>\beta_0$, there are no faulty right nodes with probability at least $1-\epsilon_1$. Hence, the expected number of data comparisons performed by Algorithm \ref{Alg2} to identify the $N$ channel outputs of $\bfx$ is at most $\mathbb{E}\left[\min\{|\cN_{(\bfy,\bfd')}|: (\bfy,\bfd') \in \cS_N((\bfx,\bfd))\}\right]$.
For $\bfx\in\cX$, by law of total expectation 
    \begin{align*}
        &\mathbb{E}\left[\min\{|\cN_{(\bfy,\bfd')}|: (\bfy,\bfd') \in \cS_N((\bfx,\bfd))\}\right] \\
        &= \mathbb{E}\left[\mathbb{E}\left[\min\{|\cN_{(\bfy,\bfd')}|: (\bfy,\bfd') \in \cS_N((\bfx,\bfd))\}\right]~|~x\in \Xlr\right]
        \intertext{From Lemma 5, $ \mathbb{E}[|\cN_{(\bfy,\bfd')}|~|~(\bfy,\bfd')] = N2^r(1+p)^{n-r}-1$, }
        &= \sum_{r=0}^{n}P(x\in\mathsf{X}_{2^r}) (N2^r(1+p)^{n-r}-1). 
    \end{align*}
From linearity of expectation, the result follows. 
\end{proof}

\begin{figure}[H]
    \centering
    \includegraphics[width=0.7\textwidth]{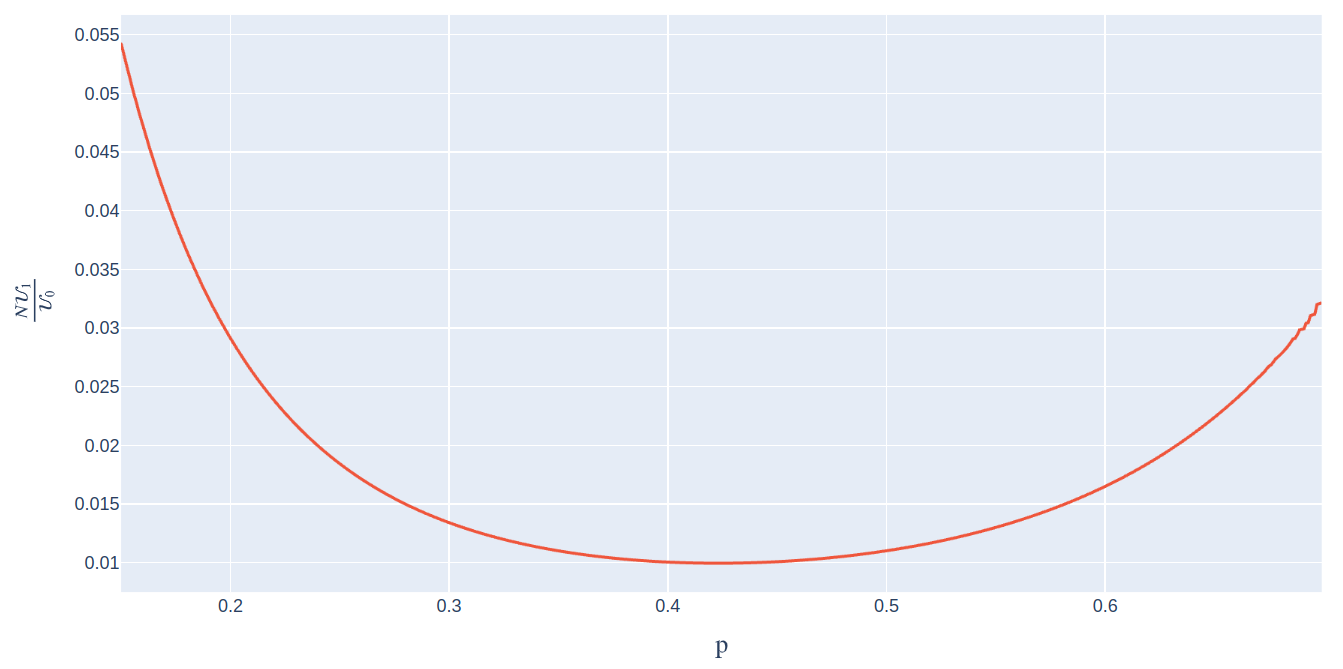}
    \label{fig:n30}
    \caption{Ratio of $N\cU_1$ to $\cU_0$ versus $p$ for $n=30$.}
\end{figure}

We now define the notion of confusability for  left nodes.  
\begin{definition}
\label{confusable}
Let $\bfx,\tx \in \cX$ then $\bfx$ is \textbf{confusable} with $\tx$, denoted by $\bfx\rightarrow\tx$, if there exists at least one $(\widetilde{\bfy},\widetilde{\bfd'})\in\cS_N((\tx,\widetilde{\bfd'}))$ such that $E_{(\widetilde{\bfy},\widetilde{\bfd'})}= \{\bfx,\tx\}$. 
\end{definition}

Next, we build a graph of left nodes, $T = (\cX, \mathsf{E_{conf}})$. Let $\bfx,\tx,\bfx'\in\cX$. Note that before the initiation of Algorithm~\ref{Alg2},  for $\bfx\rightarrow\tx$, it must be that $d_H(\bfx,\tx)=1$. For ease of analysis, we do not consider the confusable edges that would be generated over the course of Algorithm~\ref{Alg2}.  Thus, there is an edge $\bfx\rightarrow\tx \in \mathsf{E_{conf}}$  if and only if $\bfx$ is confusable with $\tx$ before the initiation of the algorithm. In the next lemma, we derive the probability that $\bfx$ has edges to all nodes in $S\subseteq\{\bfx':d_H(\bfx,\bfx')=1\}$.

\begin{restatable}{lemma}{LemmaTen}\label{lb1conf}
Let $\bfx\in\cX$ and let $S\subseteq\{\bfx':d_H(\bfx,\bfx')=1\}$. Then, 
\begin{align*}
P\left(\bigcup_{j=1}^{|S|}(\bfx\rightarrow\bfx_i)\right) &= 
\prod_{j=1}^{|S|}\left(1-\left(1-p(1-p)^{n-1}\right)^{N-j+1}\right),
\end{align*}
where $\bfx_i\in S$ for $i\in[|S|]$.
\end{restatable}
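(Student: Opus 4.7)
My plan is to compute $P\bigl(\bigcap_{j=1}^{|S|}(\bfx \to \bfx_j)\bigr)$ via a chain-rule decomposition that exploits the single-erasure structure of distance-$1$ confusability under the $\mathsf{BEC}(p)$. Enumerate $S = \{\bfx_1, \ldots, \bfx_{|S|}\}$ and let $i_j$ be the unique bit position where $\bfx$ and $\bfx_j$ disagree.

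First, I would translate each confusability $\bfx \to \bfx_j$ into an event on the noisy copies. By Definition~\ref{confusable}, it holds iff some noisy copy in $\cS_N((\bfx_j, \bfd_j))$ has neighborhood $\{\bfx, \bfx_j\}$ in $\cG$; since $d_H(\bfx,\bfx_j)=1$, this requires the copy's only erasure to lie at position $i_j$ and nowhere else. A single noisy copy has this specific pattern with probability $q \triangleq p(1-p)^{n-1}$, and hence the marginal $P(\bfx \to \bfx_j) = 1-(1-q)^N$, which matches the $j=1$ factor in the claimed product.

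Next, I would decompose the joint probability by the chain rule,
\begin{align*}
P\Bigl(\bigcap_{j=1}^{|S|}(\bfx \to \bfx_j)\Bigr) = \prod_{j=1}^{|S|} P\bigl(\bfx \to \bfx_j \,\big|\, \bfx \to \bfx_1, \ldots, \bfx \to \bfx_{j-1}\bigr),
\end{align*}
and evaluate each conditional factor by a witness-reservation argument. The key structural observation is that any noisy copy can have at most one single-erasure position, so it can certify at most one of the $|S|$ events (the positions $i_1,\ldots,i_{|S|}$ are distinct by the hypothesis $S \subseteq \{\bfx': d_H(\bfx,\bfx')=1\}$ with the $\bfx_j$'s distinct). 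Designating, say, the first qualifying copy as the witness for each prior event commits $j-1$ copies, leaving $N-j+1$ copies whose relevant erasure bits remain i.i.d.; hence the conditional probability that at least one of them witnesses $\bfx \to \bfx_j$ is $1-(1-q)^{N-j+1}$. Multiplying over $j$ then yields the stated product.

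The main obstacle is making the witness-reservation rigorous: one must argue that conditioning on a canonical choice of witness for each of the first $j-1$ events leaves the residual copies i.i.d.\ Bernoulli in the positions that matter for the $j$-th event. This is a subtle point because the conditioning is on stopping-time-style events, and I expect to handle it by a coupling that reveals only the minimum information needed to certify each prior confusability, so that the remaining $N-j+1$ copies retain their original product distribution.
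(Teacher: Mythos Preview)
There is a gap in which pool of noisy copies you take the witnesses from. You invoke Definition~\ref{confusable} and place the witness for $\bfx\to\bfx_j$ in $\cS_N((\bfx_j,\bfd_j))$, i.e.\ among the $N$ copies of the \emph{target} $\bfx_j$. Under that reading the events $\{\bfx\to\bfx_j\}_{j\in[|S|]}$ draw on pairwise disjoint pools of $N$ copies (one pool per target), so they are mutually independent and the joint probability is simply $(1-(1-q)^N)^{|S|}$ with $q=p(1-p)^{n-1}$; your ``reservation'' step then removes copies from pools that have nothing to do with event $j$ and cannot shrink the $N$ fresh copies available for it. The paper's proof, by contrast (and inconsistently with its own Definition~\ref{confusable}), takes the witnesses in $\cS_N((\bfx,\bfd))$, i.e.\ among the copies of the \emph{source} $\bfx$; only under that reading is there a single shared pool of $N$ copies, and only then does your single-erasure exclusivity observation become relevant.

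Even after switching to the copies-of-$\bfx$ reading, the coupling you hope for cannot deliver exact equality: the stated product is in general a strict lower bound. For $|S|=2$ and $N=2$, both events hold iff one copy is a singleton erasure at $i_1$ and the other at $i_2$, giving probability $2q^2$, whereas the product gives $(2q-q^2)\,q=2q^2-q^3<2q^2$. What your reservation heuristic actually supports is the inequality $P\bigl(\bfx\to\bfx_j\mid \bfx\to\bfx_1,\ldots,\bfx\to\bfx_{j-1}\bigr)\ge 1-(1-q)^{N-j+1}$, and a lower bound is precisely what the downstream application (Proposition~\ref{probappear}, the comparison with the random directed cube $G_A(p_T)$) needs; so you should aim for and justify $\ge$, not $=$.
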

\begin{proof}
 From Definition~\ref{confusable}, $\bfx\rightarrow\tx$ there must exist a $\bfy\in \cS_N((\bfx,\bfd))$ such that $ E_{(\bfy,\bfd')}= \{\bfx,\tx\}$, which happens if and only if $\bfy$ is erased only at the position where $\bfx$ and $\tx$ differ, which happens with the probability $p(1-p)^{n-1}$. Therefore, $P\left(\bigcup_{j=1}^{|S|}(\bfx\rightarrow\bfx_i)\right) = 
\prod_{j=1}^{|S|}\left(1-\left(1-p(1-p)^{n-1}\right)^{N-j+1}\right)$. 
\end{proof}

Next, let $G_A = (\cX,\cE)$ be a directed $n$-cube~\cite{rgt}. A vertex $\bfx\in \cX$ has outgoing edges to the vertices $\{\bfx':d_H(\bfx,\bfx')=1, \bfx'\in \cX\}$. Let $G_A(p_e)$ denote a random sub-graph of $G_A$ where every edge in $\cE$ is selected with probability $p_e$. 

\begin{restatable}{proposition}{ApperenceConnectedCom}
\label{probappear}
The probability of the appearance of a connected component is greater in $T$ than in $G_A(p_T)$, where $p_T\triangleq \left(1-\left(1-p(1-p)^{n-1}\right)^{N-n+1}\right)$. 
\end{restatable}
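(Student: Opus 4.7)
The plan is to establish the stochastic comparison on a subgraph-by-subgraph basis. Concretely, I will show that for any fixed connected subgraph $H=(V(H),E(H))$ of the directed $n$-cube,
$$P(H\subseteq T)\geq P(H\subseteq G_A(p_T)).$$
Since the appearance of any prescribed connected component is an increasing event in the edge set, this vertex-by-vertex containment inequality immediately implies the stated dominance on the event of appearance.

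The first step is to decompose the containment event over the out-stars at each vertex. For each $\bfx\in V(H)$, let $S_\bfx$ denote the set of out-neighbors of $\bfx$ in $H$, and write $E_T(\bfx)$ and $E_{G_A}(\bfx)$ for its out-neighbor sets in $T$ and $G_A(p_T)$, respectively. In $G_A(p_T)$, the out-stars at distinct vertices are independent by construction. In $T$, by Definition~\ref{confusable} and the proof of Lemma~\ref{lb1conf}, the outgoing edges from $\bfx$ are determined by $\cS_N((\bfx,\bfd))$ alone, which is independent of the channel outputs of every other codeword; hence the out-stars at distinct vertices are independent in $T$ as well. This yields the factorizations
$$P(H\subseteq T)=\prod_{\bfx\in V(H)}P(S_\bfx\subseteq E_T(\bfx)),\qquad P(H\subseteq G_A(p_T))=\prod_{\bfx\in V(H)}p_T^{|S_\bfx|}.$$

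The second step is a term-by-term comparison of the per-vertex factors. Lemma~\ref{lb1conf} gives
$$P(S_\bfx\subseteq E_T(\bfx))=\prod_{j=1}^{|S_\bfx|}\bigl(1-(1-p(1-p)^{n-1})^{N-j+1}\bigr).$$
Since the out-degree of every vertex of the $n$-cube is $n$, we have $|S_\bfx|\leq n$, and therefore for every $j\in[|S_\bfx|]$ the exponent $N-j+1$ is at least $N-n+1$. Because $1-p(1-p)^{n-1}\in(0,1)$, each factor in the product is at least $p_T=1-(1-p(1-p)^{n-1})^{N-n+1}$, so $P(S_\bfx\subseteq E_T(\bfx))\geq p_T^{|S_\bfx|}$. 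Multiplying these inequalities over $\bfx\in V(H)$ gives the desired conclusion.

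The main technical point to settle carefully is the independence of the outgoing edges at distinct vertices in $T$: although the edges emanating from a single $\bfx$ are correlated (they share the same $N$ noisy reads of $\bfx$), edges leaving different codewords depend on disjoint blocks of channel noise, which is exactly what licenses the per-vertex factorization. Once this independence is justified, the rest is an elementary inequality using $|S_\bfx|\leq n$ and the explicit formula from Lemma~\ref{lb1conf}.
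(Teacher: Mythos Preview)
Your proof is correct and follows essentially the same approach as the paper: both arguments rest on the independence of outgoing edges across distinct source vertices and then invoke Lemma~\ref{lb1conf} to lower-bound each per-vertex factor by $p_T^{|S_\bfx|}$ using $|S_\bfx|\le n$. Your write-up simply makes explicit the factorization over out-stars and the term-by-term inequality that the paper compresses into a single sentence.
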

\begin{proof}
    Since the probability that there is an edge $\bfx\rightarrow\bfx'$ is independent of the existence of the edge $\tx\rightarrow \bfx'$,  the proposition follows form Lemma~\ref{lb1conf}. 
\end{proof}

\begin{restatable}{lemma}{TAlmostConnected}   
For $N > N_0 \triangleq n-\frac{1}{\log(1-p(1-p)^{n-1})} = \cO_p\left(\frac{1}{p(1-p)^{n-1}}\right)$, $T$ is almost surely connected. 
\end{restatable}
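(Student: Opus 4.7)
The plan is to apply Proposition~\ref{probappear} to compare $T$ with the random subgraph $G_A(p_T)$ of the $n$-cube, and then invoke the classical Erd\H{o}s--Spencer/Burtin theorem on hypercube connectivity: when the edge-retention probability $p^*$ is a constant strictly above $1/2$, the random $n$-cube $G_A(p^*)$ is asymptotically almost surely connected, since the binding obstruction is the existence of isolated vertices, whose expected count $(2(1-p^*))^n$ vanishes. The central task is therefore to verify that the hypothesis $N > N_0$ forces $p_T$ to sit uniformly above $1/2$ in $n$.

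I would establish this by direct substitution. Writing $q \triangleq p(1-p)^{n-1}$ and using the defining relation $N_0 - n + 1 = 1 - 1/\log(1-q)$ (natural logarithm), a one-line computation gives
\begin{align*}
(1-q)^{N_0-n+1} \;=\; (1-q)\cdot (1-q)^{-1/\log(1-q)} \;=\; (1-q)/e,
\end{align*}
so $p_T(N_0) = 1 - (1-q)/e \geq 1 - 1/e > 1/2$. Since $(1-q)^{N-n+1}$ is strictly decreasing in $N$, the bound $p_T \geq 1 - 1/e$ persists for every $N > N_0$, uniformly in $n$. The expected number of isolated vertices in $G_A(p_T)$ is therefore at most $(2(1-p_T))^n \leq (2/e)^n \to 0$, and $G_A(p_T)$ is asymptotically almost surely connected by Erd\H{o}s--Spencer/Burtin. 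Proposition~\ref{probappear} transfers this to $T$, yielding $P(T \text{ connected}) \geq P(G_A(p_T) \text{ connected}) \to 1$. The asymptotic identification $N_0 = \cO_p(1/(p(1-p)^{n-1}))$ is immediate from the expansion $-1/\log(1-q) \sim 1/q$ as $q \to 0$, valid for any fixed $p \in (0,1)$ and $n \to \infty$.

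The most delicate point is the transfer step, since Proposition~\ref{probappear} as stated only compares probabilities of the \emph{appearance} of individual connected sub-structures, whereas the connectivity of $T$ is a global graph property. The cleanest rigorous route is to upgrade Lemma~\ref{lb1conf} into an explicit coupling between $T$ and $G_A(p_T)$ under which every edge of the latter is guaranteed to appear in the former; connectivity, being a monotone increasing graph property, then transfers automatically under the coupling. A second option---which bypasses Erd\H{o}s--Spencer entirely---is to argue inside $T$ directly: the events $\{\bfx \not\rightarrow \tx\}$ for distinct neighbors $\tx$ of $\bfx$ are independent, so $P(\bfx \text{ isolated in } T) = (1-q)^{Nn}$, which at $N = N_0$ produces an expected isolated-vertex count of order $(2/e)^n$; this first-moment estimate can then be promoted to connectivity by a union bound over small proper subsets of $\cX$ with no confusion edge crossing the cut, using the same lower bound on edge probabilities supplied by Lemma~\ref{lb1conf}.
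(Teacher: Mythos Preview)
Your proposal is correct and follows essentially the same approach as the paper: verify that $N>N_0$ forces $p_T>1/2$, invoke the random directed $n$-cube connectivity threshold from \cite{rgt}, and transfer to $T$ via Proposition~\ref{probappear}. Your treatment is in fact more careful than the paper's three-line proof---you supply the explicit computation $(1-q)^{N_0-n+1}=(1-q)/e$ that the paper leaves as ``it can be verified,'' and your observation that Proposition~\ref{probappear} as stated does not literally cover the global monotone property ``connected'' (and would need a coupling upgrade of Lemma~\ref{lb1conf} to be fully rigorous) is a valid criticism that applies equally to the paper's own argument.
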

\begin{proof}
From \cite{rgt}, we know that $G_A(p_e)$ is almost surely connected if  $p_e >\frac{1}{2}$. It can be verified that for $N >n-\frac{1}{\log(1-p(1-p)^{n-1})}$, $p_e = p_T >\frac{1}{2}$. Then from Proposition~\ref{probappear}, the result follows. 
\end{proof}
\vspace{0.2cm}
We define region $\cR''\subseteq\cR'$ as $\cR'' \triangleq \{(\beta,N):\beta\geq\beta_0, N\geq N_0\}$.    
\begin{restatable}{lemma}{LastLemma}\label{ExpectedComparisons3}
    The expected number of data comparisons performed by Algorithm \ref{Alg2} when $(\beta,N)\in\cR''$ is at most 
    \begin{align*}
      \cU_2 \triangleq N2^n\left(1+p\right)^n.
    \end{align*}
\end{restatable}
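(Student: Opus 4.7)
My plan is to build on Lemma~\ref{ExpectedComparisons2} and sharpen the bound using the almost-sure connectedness of the graph $T$ in the regime $N \geq N_0$. The key new ingredient is that the order of every left node collapses to $1$ with high probability, which makes the min-$|\cN|$ read chosen for each $\bfx$ coincide with one of its own zero-erasure reads.

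First, as in the proof of Lemma~\ref{ExpectedComparisons2}, the condition $\beta \geq \beta_0$ ensures $P(|\mathsf{Y_{faulty}}| \geq 1) < \epsilon_1$, so with high probability Algorithm~\ref{Alg2} performs at most
\begin{align*}
\mathbb{E}\left[\sum_{\bfx \in \cX} \min\{|\cN_{(\bfy,\bfd')}| : (\bfy,\bfd') \in \cS_N(\bfx,\bfd)\}\right]
\end{align*}
data comparisons.

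Second, I would argue that when $N \geq N_0$, every left node almost surely has order $1$, i.e.\ possesses at least one read with no erasures. The threshold $N_0$ was chosen so that $p_T > 1/2$, the connectivity threshold of the random $n$-cube (Proposition~\ref{probappear}); the same regime makes the tail estimate $P(\bfx \text{ has no zero-erasure read}) = (1-(1-p)^n)^N$ small, and a union bound over the $2^n$ left nodes controls the probability of failure.

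Third, for any left node $\bfx$ whose min-$|\cN|$ read $(\bfy,\bfd')$ has zero erasures, $E_{(\bfy,\bfd')} = \{\bfx\}$, so $\cN_{(\bfy,\bfd')} = E_\bfx \setminus \{(\bfy,\bfd')\}$ and $|\cN_{(\bfy,\bfd')}| = |E_\bfx| - 1$. Mimicking the computation in Lemma~\ref{2hop}, each of the $N$ reads of each $\tx \in \cX$ lies in $E_\bfx$ with probability $p^{d_H(\bfx,\tx)}$, so
\begin{align*}
\mathbb{E}[|E_\bfx|] = N \sum_{r=0}^{n} \binom{n}{r} p^r = N(1+p)^n.
\end{align*}
Summing over the $2^n$ left nodes and applying linearity of expectation gives the stated bound $\cU_2 = N 2^n (1+p)^n$.

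The main obstacle is step~2: the connectedness of $T$ literally only forces each vertex to have an incident confusability edge, which translates to order $\leq 2$, not order $1$. To close the gap I would rely on the observation that in the regime $N \geq N_0$ the quantity $(1-(1-p)^n)^N$ is already small enough for the refined ``order $1$'' claim to hold via a direct union bound, so the $T$-connectedness result functions as a convenient sufficient condition while the hard inequality is supplied by a standard tail estimate.
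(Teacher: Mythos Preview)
Your step~3 is fine and is exactly the computation the paper relies on: once a left node is pruned through a degree-$1$ read, the expected cost is at most $N(1+p)^n$, and summing over $2^n$ left nodes yields $\cU_2$.

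The problem is step~2, and it is precisely the gap you name but do not close. Your proposed fix is the union bound
\[
P\bigl(\exists\,\bfx\text{ with no zero-erasure read}\bigr)\ \le\ 2^n\bigl(1-(1-p)^n\bigr)^N,
\]
which is $o(1)$ only once $N\gtrsim n\ln 2/(1-p)^n$. But $N_0\sim \ln 2/\bigl(p(1-p)^{n-1}\bigr)$, so for fixed $p$ your required $N$ exceeds $N_0$ by a factor $\Theta(np/(1-p))=\Theta(n)$. Hence the static claim ``every left node already has order~$1$'' is simply false on most of $\cR''$; the union bound does not rescue it, and the sentence ``the $T$-connectedness result functions as a convenient sufficient condition while the hard inequality is supplied by a standard tail estimate'' is where the argument breaks.

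The paper's route is dynamic rather than static. Connectivity of $T$ gives every $\tx$ an incoming edge $\bfx\to\tx$, i.e.\ a read of $\tx$ whose left-neighbourhood is exactly $\{\bfx,\tx\}$; this is order~$2$, as you correctly observe. The point is that once $\bfx$ has been pruned, that same read of $\tx$ is effectively degree~$1$, so $\tx$ has order~$1$ \emph{at the moment it is pruned}. Because $T$ is connected, the algorithm can process the left nodes along a spanning order in which each successive node has already had such a predecessor removed, so it ``always prunes an order-$1$ node'' in the updated graph. Your proposal replaces this inductive mechanism with a blanket statement about initial orders, and that replacement needs a strictly stronger hypothesis on $N$ than $N\ge N_0$.
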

\begin{proof}
Since, the graph $T$ is connected, Algorithm \ref{Alg2} will always prune an order $1$ node. Hence, the result follows. 
\end{proof}

Hence, from Lemmas \ref{ExpectedComparisons}, \ref{ExpectedComparisons2} and \ref{ExpectedComparisons3}, the expected number of data comparisons performed by Algorithm \ref{Alg2} is only a $\kappa_{\beta,N}$-fraction of data comparisons required by clustering based approaches, where $\left(\frac{1+2p-p^2}{2}\right)^n \leq \kappa_{\beta,N} \leq \left(\frac{1+p}{2}\right)^n $.


\newpage

\end{document}